\documentclass[reprint,amsmath,amssymb,aps,]{revtex4-2}

\usepackage[T1]{fontenc}
\usepackage{lmodern}
\usepackage{textcomp}
\usepackage{latexsym}                                                               
\usepackage{amsfonts}
\usepackage[dvipdfmx]{graphicx}
\usepackage{amsthm,amsmath,amssymb,bm}            
\usepackage{fancybox,url,ascmac,comment,wrapfig}
\usepackage{mathtools}
\usepackage{graphicx}
\usepackage{ascmac}
\usepackage{physics}
\usepackage{tikz}
\usepackage[inline]{enumitem} 
\usepackage{algorithmic}
\usepackage{algorithm}

\theoremstyle{definition}

\newtheorem{defi}{Definition}
\newtheorem{exam}{Example}
\newtheorem{lem}{Lemma}
\newtheorem{theo}{Theorem}

\newtheorem{coro}{Corollary}
\newtheorem{rem}{Remark}

\newcommand{\R}[1]{\textcolor{red}{#1}}
\colorlet{RED}{red}

\newcommand{\B}[1]{\textcolor{blue}{#1}}

\newcommand{\SET}[1]{\mathbb{#1}}
\newcommand{\NNR}{\SET{R}_{\ge 0}}
\newcommand{\Z}{\SET{Z}}
\newcommand{\N}{\SET{N}}
\newcommand{\C}{\SET{C}}

\newcommand{\setint}[1]{[\hspace{-0.5mm}[#1]\hspace{-0.5mm}]}
\newcommand{\subint}[2]{\tbinom{\setint{#1}}{#2}}

\newcommand{\mat}[1]{\mathbf{#1}}

\newcommand{\pure}[2]{\C^{#1 \otimes #2}}
\newcommand{\state}[2]{S(\pure{#1}{#2})}
\newcommand{\QU}[1]{\mathcal{#1}}

\begin{document}
\preprint{APS/123-QED}

\title{Insertion Correcting Capability for Quantum Deletion-Correcting Codes}

\author{Ken Nakamura}
\email{f006wbw@yamaguchi-u.ac.jp}
\author{Takayuki Nozaki}%
\email{tnozaki@yamaguchi-u.ac.jp}
\affiliation{
  Dept.\ of Informatics, Division of Fundamental Sciences,
  Graduate School of Sciences and Technology for Innovation, 
  Yamaguchi University,
  1677-1, Yoshida, Yamaguchi-shi, Yamaguchi, 753-8512, JAPAN
}
\altaffiliation[Also at]{
  the Research Institute for Time Studies, Yamaguchi University
}

\date{\today}

\begin{abstract}
  This paper proves that
  any quantum $t$-deletion-correcting codes
  also correct a total of $t$ insertion and deletion errors
  under a certain condition.
  Here,
  this condition
  is that
  a set of quantum states is defined as a quantum error-correcting code
  if
  the error spheres of its states are disjoint,
  as classical coding theory.
  In addition,
  this paper proposes the quantum indel distance
  and
  describes insertion and deletion errors correcting capability of quantum codes
  by this distance.
\end{abstract}

\maketitle

\section{Introduction}
In classical coding theory,
a classical insertion error adds a new symbol to a sequence
and
a classical deletion error loses a symbol of a sequence.
A set of sequences
after
classical deletion/insertion errors to a sequence $\bm{c}$
is called
a {\it deletion/insertion sphere centered at $\bm{c}$.}
Classical deletion/insertion-correcting codes
are defined by
the sets of sequences
such that
deletion/insertion spheres centered at their elements
are disjoint.
It is known that
a code $C$ corrects $t$ classical deletion errors
if and only if
$C$ corrects a total of $t$ classical deletion and insertion errors
\cite{levenshtein1966binary}.

In quantum coding theory,
deletion and insertion errors are defined as
classical coding theory;
namely,
an insertion error adds a new qudit to a multi-qudit system
and
a deletion error loses a qudit of a multi-qudit system.
In previous research,
the quantum deletion-correcting codes
\cite{hagiwara2020four,matsumoto2022constructions,hagiwara2023quantum}
correct insertion errors
\cite{hagiwara2021four,nakamura2025multiple,sasaki2024insertion}.
In other words,
these quantum deletion-correcting codes
are also
quantum insertion-correcting codes.
However,
it is an open problem \cite{matsumoto2021survey,hagiwara2024introduction}
that
for a quantum code $\QU{C}$,
the following statements are equivalent:
\begin{enumerate*}[label={(\roman*)}]
\item
  $\QU{C}$
  corrects
  $t$ deletion errors.
\item
  $\QU{C}$
  corrects
  a total of $t$ deletion and insertion errors.
\end{enumerate*}

To solve this problem,
Shibayama et al. 
focused on the Knill-Laflamme (KL) condition.
They 
showed that
KL condition for $t$ deletion errors and
that for $t$ {\it separable insertion errors}
are 
equivalent \cite{shibayama2022equivalence}.
Here,
a separable insertion error is
an insertion error where
there is no entanglement
between the original multi-qudit system and the inserted qudit.
It is known that
insertion errors
are limited to
separable insertion errors
if
the original multi-qudit system is pure \cite{sibayama_single_ins_def}.
Furthermore,
a quantum code $\QU{C}$ satisfies
the KL condition for deletion/insertion errors
if and only if
there is
a recovery operator of $\QU{C}$ \cite{shibayama2025necessary}.
In other words,
Shibayama et al. \cite{shibayama2022equivalence,shibayama2025necessary} proved that
a quantum code $\QU{C}$ corrects
$t$ deletion errors
if and only if
$\QU{C}$ corrects a total of $t$ deletion and separable insertion errors.
In particular, 
a quantum code $\QU{C}$ corrects
single-deletion errors
if and only if
$\QU{C}$ corrects single-insertion errors.

There
are
two issues
in \cite{shibayama2022equivalence,shibayama2025necessary}.
A first issue
is
to limit insertion errors to separable insertion errors.
In the case of
composite errors consisting of deletions and insertions,
any insertion error is not necessarily limited to
separable insertion error
since deletion/insertion errors
may change to mixed states from pure states.
Hence,
the discussion in \cite{shibayama2025necessary}
was limited to single-deletion/insertion errors
and
this result is not applied to composite errors.
Furthermore,
separable insertion errors
are
strong assumptions
since
deletion and insertion errors are non-commutative \cite{hagiwara2024introduction},
deletion and separable insertion errors are not.
A second issue is that
the KL condition for deletion/insertion errors in \cite{shibayama2022equivalence,shibayama2025necessary}
is not 
suitable
to discuss
the deletion/insertion correcting capability.
More precisely,
this KL condition
also requires 
the orthogonality of the basis after deletion/insertion errors
for distinct error positions.
Indeed,
there exist
a quantum code \cite{hagiwara2024pure}
such that
the basis after single-deletion error
is
orthogonal
but
distinct codewords
change
to
the same state
by this error;
namely,
this code
satisfies 
the orthogonality of the basis after single-deletion
but
cannot correct this error.
From the above,
the condition for the deletion/insertion correcting capability
also requires that 
distinct codewords
do not change
to
the same state
by deletion/insertion errors.

To solve the open problem and these issues,
we focus on the deletion/insertion sphere.
Here,
the deletion/insertion sphere
is defined as classical coding theory;
namely,
the deletion/insertion sphere centered at a quantum state $\rho$
is
a set of quantum states
after
deletion/insertion errors to $\rho$.

The main theorem (Theorem \ref{theo:1del_iff_1ins_2}) in this paper
is that
any quantum $t$-deletion-correcting codes
also correct a total of $t$ insertion and deletion errors,
where
a quantum deletion/insertion-correcting code
is defined
by
a set of quantum states
such that 
the deletion/insertion spheres of its states are disjoint,
similar to classical coding theory.
The main theorem is derived
from a proposition (Theorem \ref{theo:insdel_sphere}) that
any composite errors consisting of $t$-insertion and $s$-deletion errors
are in
an error where $t$-insertion errors occur after $s$-deletion errors.

In addition,
this paper explicitly provides
quantum states after insertion error to mixed states.
The previous research \cite{sibayama_single_ins_def} 
explicitly provided
quantum states after insertion error to pure states.
This result is not applied
in cases of composite errors consisting of insertions and deletions
since 
deletion/insertion errors
may change
to mixed states
from pure states.
By generalizing the result in \cite{sibayama_single_ins_def},
this paper explicitly provides
quantum states after insertion error to mixed states.
From this,
we can calculate
quantum states after insertion error
and
the set of its quantum state.
By using this result,
this paper gives
\begin{enumerate*}[label={(\roman*)}]
\item an example of the main theorem and Theorem \ref{theo:insdel_sphere},
\item a set of quantum states
  such that
  the set corrects single-insertion
  but
  not a single-deletion.
\end{enumerate*}
This
suggests that
the converse of the main theorem
does not
hold,
unlike in classical coding theory.
In other words,
any quantum insertion-correcting code
does not necessarily
correct
deletion errors.

From the main theorem and Theorem \ref{theo:insdel_sphere},
this paper
defines
the quantum indel distance
and
characterizes the insertion and deletion errors correcting capability of quantum codes
by this distance. 
The Hamming distance \cite{hamming1950error},
the indel distance \cite{levenshtein1966binary}, and
the distance of a stabilizer code \cite{nielsen2016quantum}
characterize
the error correcting capability
for
bit-flip errors,
classical insertion/deletion errors,
and unitary errors,
respectively,
and
derive
the bounds of classical/quantum error-correcting codes
for each error.
To characterize insertion and deletion errors correcting capability
and
derive some bound of quantum deletion/insertion-correcting codes,
this paper
defines
the quantum indel distance
based on the indel distance \cite{levenshtein1966binary}.
By this distance,
this paper
proves that
a set $\QU{X}$ 
corrects
a total of $t$ insert
ion and deletion errors
if
the quantum indel distance of distinct states in $\QU{X}$ 
is
greater than $2t$.

To summarize the above,
the contributions of this paper
are
as follows:
\begin{itemize}
\item
  This paper
  proves that
  any quantum $t$-deletion-correcting codes
  also correct a total of $t$ insertion and deletion errors,
  where
  a set of quantum states is defined as a quantum error-correcting code
  if
  the error spheres of its states are disjoint.
\item  
  This paper explicitly provides
  quantum states after insertion error to mixed states.
\item
  This paper
  defines
  the quantum indel distance
  and
  characterizes the insertion and deletion errors correcting capability of quantum codes
  by this distance. 
\end{itemize}

The rest of the paper is organized as follows.
Section \ref{sec:pre} introduces the notations used throughout the paper.
To efficiently handle insertion and deletion errors,
Section \ref{sec:ins_state} explicitly provides
quantum states after insertion error
and
Section \ref{sec:ins_del_state}
proves that
any composite errors consisting of insertion and deletion errors
are in
an error where $t$-insertion errors occur after $s$-deletion errors.
Section \ref{sub:relat_indel_correct} proves that
any quantum $t$-deletion-correcting code
is also
a quantum code correcting
a total of $t$ insertion and deletion errors,
whereas
any quantum $t$-insertion-correcting code
is not
a quantum $t$-deletion-correcting code.
Section \ref{sec:Qindel_dis} proposes the quantum indel distance and
describes insertion and deletion errors correcting capability of quantum codes
by this distance. 

\section{Preliminary}
\label{sec:pre}
This section gives the notations used throughout the paper.
Moreover,
this section introduces
quantum states, deletion errors, and insertion errors.

\subsection{Notations}
Let $\Z$, $\Z^+$, $\N$, $\NNR$, and $\C$ be
the sets of all integers, positive integers, nonnegative integers,
nonnegative real numbers, and complex numbers, respectively.
For $l \in \Z^+$,
define $\Z_l := \{ 0,1,...,l-1 \}$.
For a given set $S$,
let $|S|$ be the cardinality of $S$
and
$S^n$ be the set of all sequences of length $n$ over $S$.
For $n, m \in \mathbb{Z}$,
define
\begin{align*}
  \setint{m,n}
  &:=
  \{ i \in \Z \mid m \leq i \leq n \},\\
  \setint{n}
  &:=
  \setint{1,n},\\
  \subint{n}{m}
  &:=
  \{ S \subset \setint{n} \mid |S|=m \}.
\end{align*}
The complex conjugate and modulus of $c \in \C$ are written as $c^*$ and $|c|$,
respectively.

The Kronecker product of
an $m \times n$ matrix $\mat{M}_1=(m_{i,j})$ and a $p \times q$ matrix $\mat{M}_2$
is defined by
the $mp \times nq$ matrix
\begin{align*}
  \mat{M}_1 \otimes \mat{M}_2
  :=
  \begin{pmatrix}
    m_{1,1}\mat{M}_2 & m_{1,2}\mat{M}_2 & \cdots & m_{1,n}\mat{M}_2\\
    m_{2,1}\mat{M}_2 & m_{2,2}\mat{M}_2 & \cdots & m_{2,n}\mat{M}_2\\
    \vdots & \vdots & \ddots & \vdots\\
    m_{m,1}\mat{M}_2 & m_{m,2}\mat{M}_2 & \cdots & m_{m,n}\mat{M}_2\\
  \end{pmatrix}.
\end{align*}
We denote the set of all square matrices of order $n$ over $\C$, by $M(n)$.
For $\mat{M} \in M(n)$,
let $\text{tr}(\mat{M})$ be
the trace of $\mat{M}$
and
let $\mat{M}^{\dagger}$ be
the adjoint of $\mat{M}$.
Let $\mat{I}_n$ be
the identity matrix of order $n$.
A matrix $\mat{M} \in M(n)$ is unitary
if $\mat{M}^\dagger \mat{M} = \mat{I}_n$,
Hermitian
if $\mat{M}^\dagger = \mat{M}$.
A Hermitian matrix $\mat{M} \in M(n)$ is positive semidefinite
if $\bm{x}^\dagger \mat{M} \bm{x} \ge 0$
for all column vectors $\bm{x} \in \C^n$.
The notation $\mat{M} \ge 0$
means that
the Hermitian matrix $\mat{M}$
is
positive semidefinite.

Let $\pure{l}{n}$ be
the $l^n$-dimensional complex vector space.
For $x \in \mathbb{Z}_l$,
let $\ket{x} \in \C^l$ be a column vector
whose only the $(x+1)$th component is $1$ and the other components are $0$.
For $\bm{x} = (x_i) \in \mathbb{Z}_l^n$,
define
\begin{align*}
  \ket{\bm{x}}
  :=
  \ket{x_1} \otimes \ket{x_2} \otimes \cdots \otimes \ket{x_n} 
  \in
  \pure{l}{n}.
\end{align*}
Define
$\bra{\phi} := \ket{\phi}^{\dagger}$.
In this paper,
we denote an orthonormal basis for $\pure{l}{n}$
by $\{ \ket{ \bm{x}_L } \}_{ \bm{x} \in \Z_l^n }$.
Note that
$\ket{ \bm{x}_L }$ is not necessarily equal to $\ket{\bm{x}}$.

Given
any orthonormal basis $\{ \ket{ \bm{x}_L } \}_{ \bm{x} \in \Z_l^n }$ for $\pure{l}{n}$,
$\mat{M} \in M(l^n)$ is represented by
\begin{align}
  \mat{M}
  =
  \sum_{\bm{x},\bm{y} \in \Z_l^n}
  c_{\bm{x},\bm{y}}
  \ket{\bm{x}_L}\bra{\bm{y}_L}
  ,~~(c_{\bm{x}, \bm{y}} \in \C).
  \label{eq:orth_basis_xL}
\end{align}
In particular, 
\begin{align}
  \mat{M}
  =
  \sum_{\bm{x},\bm{y} \in \Z_l^n}
  c_{\bm{x},\bm{y}}
  \ket{\bm{x}}\bra{\bm{y}}
  ,~~(c_{\bm{x}, \bm{y}} \in \C).
  \label{eq:orth_basis_x}
\end{align}

\subsection{Quantum States \cite{lidar2013quantum}}
Let $\state{l}{n}$ be
the set of all density matrices of quantum states
represented by $n$ $l$-level qudits,
i.e.,
\begin{align*}
  \state{l}{n}
  :=
  \{
  \rho \in M(l^n)
  \mid
  \rho \ge 0,
  \text{tr}(\rho) = 1
  \}.
\end{align*}
For simplicity,
let $\state{l}{0} := \Big\{ \begin{pmatrix} 1 \end{pmatrix} \Big\}$.
Define $\state{l}{\N} := \bigcup_{n \in \N} \state{l}{n}$.

By the spectral decomposition,
$\rho \in \state{l}{n}$ is represented by
\begin{align}
  \rho
  &=
  \sum_{ \bm{x} \in \Z_l^n }
  p_{ \bm{x} } \ket{ \bm{x}_L } \bra{ \bm{x}_L },
  \label{eq:spec}
\end{align}
where
$\{ \ket{ \bm{x}_L } \}_{ \bm{x} \in \Z_l^n }$
is an orthonormal basis for $\pure{l}{n}$
and
$p_{ \bm{x} } \in \NNR$ satisfies
$\sum_{ \bm{x} \in \Z_l^n } p_{ \bm{x} } = 1$.
A quantum state $\rho$ is pure,
if $\rho$ has rank $1$.
Otherwise $\rho$ is mixed.
When $\rho \in \state{l}{n}$ is pure,
there exists a vector $\ket{\phi} \in \pure{l}{n}$
such that $\rho = \ket{\phi}\bra{\phi}$.
Then,
we denote $\rho$, by $\ket{\phi}$.

\subsection{Quantum Errors}\label{subsec:error}
We denote
the set of quantum states after a quantum error $\QU{E}$ to a quantum state $\rho$,
by $\QU{E}(\rho)$.

\subsubsection{Deletion Errors}
Deletion errors
loses
a qudit of a multi-qudit system.
For $\rho \in \state{l}{n}$ given in Eq.~\eqref{eq:orth_basis_x} and
$p \in \setint{n}$,
the partial trace $\text{Tr}_p$ is defined as 
\begin{align*}
  \text{Tr}_p(\rho)
  :=
  \sum_{ \bm{x}=(x_i), \bm{y}=(y_i) \in \Z_l^n } 
  c_{\bm{x},\bm{y}} 
  \text{tr}( \ket{x_p}\bra{y_p} )
  \ket{ \bm{x}_{ \neg p } }\bra{ \bm{y}_{ \neg p } },
\end{align*}
where
\begin{align*}
  \bm{x}_{ \neg p } := ( x_1, ..., x_{p-1}, x_{p+1}, ..., x_n ).
\end{align*}
Then,
for $\rho \in \state{l}{n}$ and
a set of deletion indices
$P = \{ p_1,p_2,...,p_s \} \in \subint{n}{s}$ with $p_1 < p_2 < \cdots < p_s$,
define $s$-deletion errors $\QU{D}_P$ as 
\begin{align*}
  \QU{D}_P(\rho)
  :=
  \text{Tr}_{p_1} \circ \text{Tr}_{p_2} \circ \cdots \circ \text{Tr}_{p_s}(\rho),
\end{align*}
where $f \circ g$ is the composition of the maps $f$ and $g$.
For $\rho \in \state{l}{n}$ and $s \leq n$,
define $s$-deletion errors $\QU{D}^s$ as
\begin{align*}
  \QU{D}^s(\rho)
  &:=
  \big\{
  \QU{D}_P(\rho)
  \mid
  P \in \subint{n}{s}
  \big\}.
\end{align*}

\subsubsection{Insertion Errors}
Insertion error adds a new qudit to a multi-qudit system.
For $\rho \in \state{l}{n}$ and $Q \in \subint{n+t}{t}$,
define
\begin{align*}
  \QU{I}_Q( \rho ) 
  := 
  \{ 
  \sigma \in \state{l}{(n+t)} 
  \mid 
  \QU{D}_Q(\sigma) = \rho 
  \}.
\end{align*}
Then,
$t$-insertion errors at $Q$
changes from $\rho$ to $\sigma \in \QU{I}_Q(\rho)$.
For $\rho \in \state{l}{n}$,
define $t$-insertion errors $\QU{I}^t$ as
\begin{align*}
  \QU{I}^t(\rho)
  &:=
  \bigcup_{Q \in \subint{n+t}{t}}
  \QU{I}_Q(\rho).
\end{align*}
For a given pure state $\rho = \ket{\phi}\bra{\phi}$,
$\sigma \in \QU{I}_Q(\ket{\phi}\bra{\phi})$
is expressed as follows \cite{sibayama_single_ins_def}:

\begin{theo}[\cite{sibayama_single_ins_def}]
  \label{theo:after_ins}
  Consider
  a quantum state $\rho \in \state{l}{n}$ given in Eq.~\eqref{eq:orth_basis_x}
  and
  a permutation $\tau$ on $\setint{n}$.
  With some abuse of notation,
  we define the {\it index permutation} $\tau(\rho) \in \state{l}{n}$ for $\rho$ by
  \begin{align*}
    \tau(\rho)
    :=
    \hspace{-3mm}
    \sum_{\substack{\bm{x}=(x_i) \in \mathbb{Z}_l^n\\\bm{y}=(y_i) \in \mathbb{Z}_l^n}}
    \hspace{-3mm}
    c_{\bm{x},\bm{y}}
    \ket{x_{\tau(1)} \cdots x_{\tau(n)}}\bra{y_{\tau(1)} \cdots y_{\tau(n)}}.
  \end{align*}
  In addition,
  for $t,n \in \mathbb{Z}^+$ and
  $Q = \{ q_1, q_2, ..., q_t \} \subset \setint{n+t}$ with $q_1 < q_2 < \cdots < q_t$,
  let $\tau^Q$ be the permutation on $\setint{n+t}$
  such that $\tau^Q(i)=q_i$ for $i \in \setint{n+1,n+t}$ and
  $j < k \Rightarrow \tau^Q(j) < \tau^Q(k)$ for $j,k \in \setint{n}$.
  Then,
  for a given pure state $\ket{\phi}\bra{\phi}$,
  any quantum state $\sigma \in \QU{I}_{Q}( \ket{\phi}\bra{\phi} )$ is denoted by
  \begin{align}
    \sigma
    =
    \tau^Q (\ket{\phi}\bra{\phi} \otimes \pi)
    \label{eq:pure_ins}
  \end{align}
  for some $\pi \in \state{l}{t}$.
\end{theo}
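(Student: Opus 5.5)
The proof rests on one standard fact: a density matrix whose reduced state on some subsystem is pure must be a product state with that pure factor. So I will first undo the index permutation $\tau^Q$, then apply this fact to the pure reduced state $\ket{\phi}\bra{\phi}$.

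\textbf{Step 1 (normalizing positions).} I will show that for any $\sigma \in \state{l}{(n+t)}$,
\begin{align*}
\QU{D}_Q(\sigma) = \text{Tr}_{n+1}\circ\cdots\circ\text{Tr}_{n+t}\big((\tau^Q)^{-1}(\sigma)\big).
\end{align*}
In words, deleting the qudits at positions $Q$ is the same as first moving them to the last $t$ positions, keeping the relative order of the remaining qudits, and then tracing those out. This is a bookkeeping check on the basis expansion~\eqref{eq:orth_basis_x}.

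Since $\tau^Q$ is a bijection on $\state{l}{(n+t)}$, it then suffices to prove the following. If $\sigma' \in \state{l}{(n+t)}$ has partial trace over the last $t$ qudits equal to $\ket{\phi}\bra{\phi}$, then $\sigma' = \ket{\phi}\bra{\phi}\otimes\pi$ for some $\pi\in\state{l}{t}$. Given this, $\sigma = \tau^Q(\sigma')$ has the claimed form.

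\textbf{Step 2 (purification argument).} Write the spectral decomposition $\sigma' = \sum_k \lambda_k \ket{\psi_k}\bra{\psi_k}$ with $\lambda_k>0$. Taking the partial trace gives
\begin{align*}
\ket{\phi}\bra{\phi} = \sum_k \lambda_k \rho_k,
\end{align*}
where each $\rho_k$ is the reduced state of $\ket{\psi_k}$ on the first $n$ qudits.

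A pure state is an extreme point of the convex set of density matrices. Hence every $\rho_k = \ket{\phi}\bra{\phi}$. Alternatively, take $\bra{\phi}\cdot\ket{\phi}$ of both sides and use $\bra{\phi}\rho_k\ket{\phi}\le 1$ to force equality for each $k$.

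Next, take the Schmidt decomposition of each $\ket{\psi_k}$. Its reduced state is rank one, so the Schmidt rank is one. Therefore $\ket{\psi_k} = \ket{\phi}\otimes\ket{\chi_k}$, after absorbing a phase.

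Summing over $k$ gives
\begin{align*}
\sigma' = \ket{\phi}\bra{\phi}\otimes \pi, \qquad \pi = \sum_k \lambda_k \ket{\chi_k}\bra{\chi_k}.
\end{align*}
Here $\pi\ge 0$ and $\text{tr}(\pi)=1$, so $\pi\in\state{l}{t}$.

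\textbf{Main obstacle.} The only delicate point is Step 1: matching the exact convention for $\tau(\rho)$ (indices $x_{\tau(i)}$) with the ordering of the partial traces in $\QU{D}_Q$. One must check whether $\tau^Q$ or its inverse is the right map. I would verify this on basis operators $\ket{\bm{x}}\bra{\bm{y}}$ and extend by linearity. Step 2 is standard.
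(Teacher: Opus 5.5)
Your proof is correct, but it takes a different route from the paper. The paper has no standalone proof of this cited statement. It recovers it (Remark after Corollary~\ref{coro:ins_state}) as the pure special case of Theorem~\ref{theo:ins_state}, whose proof is a block-matrix positivity argument:
\begin{itemize}
\item expand $\sigma=\sum_{\bm{x},\bm{y}}\ket{\bm{x}_L}\bra{\bm{y}_L}\otimes\mat{B}_{\bm{x},\bm{y}}$ in an eigenbasis of $\rho$;
\item read off $\text{tr}(\mat{B}_{\bm{x},\bm{y}})=p_{\bm{x}}\delta_{\bm{x},\bm{y}}$ from $\QU{D}_{\setint{n+1,n+t}}(\sigma)=\rho$;
\item use Lemmas~\ref{lem:hansei} and~\ref{lem:px=0->cxy=0} (a zero diagonal entry of a positive semidefinite matrix kills its row and column) to annihilate every block whose row or column index has $p_{\bm{x}}=0$.
\end{itemize}
For $\rho=\ket{\phi}\bra{\phi}$ only one block survives, and it is a density matrix $\pi$.

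You instead diagonalize $\sigma$ itself and combine extremality of pure states with the Schmidt decomposition. Your argument is shorter and basis-free, and it is the standard fact that a pure marginal forces a product state. However, it depends essentially on purity: for mixed $\rho$ the eigenvectors of $\sigma$ need not have marginal $\rho$, so the argument does not extend to describe $\QU{I}_Q(\rho)$ in general. The paper's argument uses only elementary matrix facts. For arbitrary $\rho$ it yields the parametrization $\sum\sqrt{p_{\bm{x}}p_{\bm{y}}}\ket{\bm{x}_L}\bra{\bm{y}_L}\otimes\mat{A}_{\bm{x},\bm{y}}$ with traceless off-diagonal blocks. The paper needs this because deletions turn pure states into mixed ones before further insertions.

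Your caution in Step 1 is justified. Read literally, the definition of $\tau(\rho)$ puts the qudit from position $\tau^Q(i)$ into position $i$. So $\tau^Q$ moves the qudit at $q_i$ to position $n+i$, and it is $(\tau^Q)^{-1}$ that places the new qudits at $Q$. For example, with $n=2$ and $Q=\{1\}$, $\tau^Q$ sends $\ket{abc}$ to $\ket{bca}$ rather than $\ket{cab}$. Your identity $\QU{D}_Q(\sigma)=\text{Tr}_{n+1}\circ\cdots\circ\text{Tr}_{n+t}((\tau^Q)^{-1}(\sigma))$ encodes the intended placement, consistent with Example~\ref{exam:ins_set}. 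Corollary~\ref{coro:ins_state} passes over this bookkeeping, which you make explicit.
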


\subsubsection{Composite Errors}
Let $\QU{E}_2 \circ \QU{E}_1$ be
the error where a quantum error $\QU{E}_2$ occurs after a quantum error $\QU{E}_1$.
For example,
for $\rho \in \state{l}{n}$,
\begin{align*}
  \QU{D}^s \circ \QU{I}^t( \rho )
  &=
  \bigcup_{P \in \subint{n+t}{s}}
  \bigcup_{Q \in \subint{n+t}{t}}
  \{
  \QU{D}_P (\sigma)
  \mid
  \sigma \in \QU{I}_Q (\rho)
  \},\\
  \QU{I}^t \circ \QU{D}^s( \rho )
  &=
  \bigcup_{Q \in \subint{n-s+t}{t}}
  \bigcup_{\sigma \in \QU{D}^s (\rho)}
  \QU{I}_Q (\sigma).
\end{align*}
Moreover,
for $\QU{X} \subset \state{l}{\N}$,
define
\begin{align*}
  \QU{E}(\QU{X})
  :=
  \bigcup_{ \rho \in \QU{X} }
  \QU{E}(\rho).
\end{align*}
In addition,
we refer
an error where $s$-deletion errors and $t$-insertion errors occur compositely
as $(s,t)$-error.
For example,
$(1,2)$-error includes
\begin{align*}
  &\QU{D}^1 \circ \QU{I}^2,
  &&\QU{I}^1 \circ \QU{D}^1 \circ \QU{I}^1,
  &&\QU{I}^2 \circ \QU{D}^1.
\end{align*}
In general,
these are each distinct
because
deletion and insertion errors are non-commutative \cite{hagiwara2024introduction}.

\section{Quantum States after Insertion Errors to Mixed States}
\label{sec:ins_state}
This section describes
quantum states after insertion errors to mixed states
by generalizing Theorem \ref{theo:after_ins}.

A $k \times k$ principal submatrix of $\mat{M} \in M(n)$
is
a submatrix that lies on the same set of $k$ rows and columns,
and
a $k \times k$ principal minor is the determinant of a $k \times k$
principal submatrix.
\begin{exam}
  The $k \times k$ principal minors of
  $\begin{pmatrix} 1 & 2 & 3\\ 4 & 5 & 6\\ 7 & 8 & 9\end{pmatrix}$ are
    \begin{align*}
      &\begin{vmatrix}
        1 
      \end{vmatrix},
      \begin{vmatrix}
        5
      \end{vmatrix},
      \begin{vmatrix}
        9
      \end{vmatrix},
      &&(k=1)\\
      &\begin{vmatrix}
        1 & 2\\
        4 & 5
      \end{vmatrix},
      \begin{vmatrix}
        1 & 3\\
        7 & 9
      \end{vmatrix},
      \begin{vmatrix}
        5 & 6\\
        8 & 9
      \end{vmatrix},
      &&(k=2)\\
      &\begin{vmatrix}
        1 & 2 & 3\\
        4 & 5 & 6\\
        7 & 8 & 9
      \end{vmatrix}.
      &&(k=3)
    \end{align*}
\end{exam}

The following are  well known results for positive semidefinite matrices
(e.g., see \cite{horn1985matrix}).

\begin{lem}
  \label{lem:hansei}
  For a Hermitian matrix $\mat{M}$,
  $\mat{M} \ge 0$
  if and only if
  all principal minors of $\mat{M}$ are nonnegative.
\end{lem}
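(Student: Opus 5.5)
The plan is to prove the two directions separately. The forward direction uses the fact that principal submatrices inherit positive semidefiniteness. The converse uses the expansion of the characteristic polynomial in terms of principal minors.

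\emph{($\Rightarrow$)} Suppose $\mat{M} \in M(n)$ is Hermitian with $\mat{M} \ge 0$, and let $S \subset \setint{n}$ with $|S| = k$. Let $\mat{P}$ be the $n \times k$ matrix whose columns are the standard basis vectors indexed by $S$. Then the principal submatrix on $S$ is $\mat{M}_S = \mat{P}^\dagger \mat{M} \mat{P}$, and it is Hermitian. For every $\bm{y} \in \C^k$ we have $\bm{y}^\dagger \mat{M}_S \bm{y} = (\mat{P}\bm{y})^\dagger \mat{M} (\mat{P}\bm{y}) \ge 0$, so $\mat{M}_S \ge 0$. By the spectral theorem, $\mat{M}_S$ is unitarily diagonalizable with real eigenvalues. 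Taking $\bm{y}$ to be an eigenvector shows that every eigenvalue is nonnegative. Hence $\det \mat{M}_S$, the product of these eigenvalues, is nonnegative.

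\emph{($\Leftarrow$)} Suppose all principal minors of the Hermitian matrix $\mat{M}$ are nonnegative. Let $E_k$ be the sum of all $k \times k$ principal minors, with $E_0 := 1$. The key identity is
\begin{align*}
  \det(\lambda \mat{I}_n + \mat{M}) = \sum_{k=0}^{n} E_k \lambda^{n-k}.
\end{align*}
To prove it, write the $j$th column of $\lambda \mat{I}_n + \mat{M}$ as $\lambda \bm{e}_j + \bm{m}_j$ and expand the determinant by multilinearity in the columns. Consider the term in which the columns indexed by a set $S$ are taken from $\mat{M}$ and the remaining $n-k$ columns are $\lambda \bm{e}_j$. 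Expanding along those standard basis columns gives exactly $\lambda^{n-k} \det \mat{M}_S$. I expect this bookkeeping to be the main (though routine) obstacle, in particular checking that no sign appears. It does not, because each $\bm{e}_j$ sits in its own position $j$, so the cofactor sign for row $j$, column $j$ is $(-1)^{2j} = +1$.

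Given the identity, for every real $\lambda > 0$ we get $\det(\lambda \mat{I}_n + \mat{M}) \ge \lambda^n > 0$. Therefore $-\lambda$ is never an eigenvalue of $\mat{M}$ when $\lambda > 0$. Since $\mat{M}$ is Hermitian, all its eigenvalues are real, so they are all nonnegative. Finally, use the spectral decomposition $\mat{M} = \sum_i \mu_i \bm{u}_i \bm{u}_i^\dagger$ with each $\mu_i \ge 0$. For any $\bm{x} \in \C^n$ this gives $\bm{x}^\dagger \mat{M} \bm{x} = \sum_i \mu_i |\bm{u}_i^\dagger \bm{x}|^2 \ge 0$, which is $\mat{M} \ge 0$.
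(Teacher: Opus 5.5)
Your proof is correct and complete. Necessity follows from the compression $\mat{P}^\dagger\mat{M}\mat{P}\ge 0$ and the eigenvalue product. Sufficiency follows from expanding $\det(\lambda\mat{I}_n+\mat{M})$ in the sums $E_k$ of principal minors. Your sign bookkeeping is right: deleting row $j$ and column $j$ shifts each remaining $\lambda\bm{e}_{j'}$ column and its nonzero row identically, so those entries stay on the diagonal.

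The paper gives no proof of this lemma. It quotes it as a well-known fact with a pointer to \cite{horn1985matrix}, so there is no in-paper argument to compare against. What you supply is the standard self-contained proof.

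Your write-up also clarifies how the paper uses the lemma. In the proof of Theorem~\ref{theo:ins_state}, the lemma is invoked to get $\mat{B}_{\bm{x},\bm{x}}\ge 0$ from $\mat{U}'\sigma\mat{U}'^\dagger\ge 0$. Formally this needs both directions: the minors of the big matrix are nonnegative, hence so are those of the principal block, hence the block is positive semidefinite. Your forward-direction argument, which is just Lemma~\ref{lem:sub_seitei_kai} with $\mat{A}=\mat{P}^\dagger$, reaches that conclusion in one step. Likewise, Lemma~\ref{lem:px=0->cxy=0} is immediate from the nonnegativity of the $2\times 2$ principal minors $m_{i,i}m_{j,j}-|m_{i,j}|^2$.
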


\begin{lem}
  \label{lem:sub_seitei_kai}
  For any matrix $\mat{A}$,
  $\mat{M} \ge 0$ satisfies
  $\mat{A} \mat{M} \mat{A}^\dagger \ge 0$.
\end{lem}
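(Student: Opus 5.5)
The statement to prove is Lemma~\ref{lem:sub_seitei_kai}: if $\mat{M} \ge 0$, then $\mat{A}\mat{M}\mat{A}^\dagger \ge 0$ for any matrix $\mat{A}$ of compatible size. The plan is to argue directly from the definition of positive semidefiniteness given in the Notations subsection. That definition asks for two things: the matrix is Hermitian, and $\bm{x}^\dagger \mat{M}\bm{x} \ge 0$ for every column vector $\bm{x}$. We check these in turn for $\mat{N} := \mat{A}\mat{M}\mat{A}^\dagger$.

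Let $\mat{A}$ be an $m \times n$ matrix and $\mat{M} \in M(n)$, so that $\mat{N} \in M(m)$.

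\textbf{Hermiticity.} Since $\mat{M}^\dagger = \mat{M}$ and $(\mat{X}\mat{Y})^\dagger = \mat{Y}^\dagger \mat{X}^\dagger$, we get
\begin{align*}
  \mat{N}^\dagger
  = (\mat{A}^\dagger)^\dagger \mat{M}^\dagger \mat{A}^\dagger
  = \mat{A}\mat{M}\mat{A}^\dagger
  = \mat{N}.
\end{align*}

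\textbf{Nonnegativity of the quadratic form.} Fix an arbitrary $\bm{x} \in \C^m$ and set $\bm{y} := \mat{A}^\dagger \bm{x} \in \C^n$. Then $\bm{y}^\dagger = \bm{x}^\dagger \mat{A}$, so
\begin{align*}
  \bm{x}^\dagger \mat{N} \bm{x}
  = (\bm{x}^\dagger \mat{A}) \mat{M} (\mat{A}^\dagger \bm{x})
  = \bm{y}^\dagger \mat{M} \bm{y}
  \ge 0,
\end{align*}
where the final inequality is the hypothesis $\mat{M} \ge 0$ applied to the vector $\bm{y}$. Since $\bm{x}$ was arbitrary, $\mat{N} \ge 0$.

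I do not expect a real obstacle here. The one point to state explicitly is that $\mat{A}$ need not be square, so the substituted vector $\bm{y}$ lives in $\C^n$ rather than $\C^m$. The definition of $\mat{M} \ge 0$ quantifies over all of $\C^n$, so it still applies to $\bm{y}$. Lemma~\ref{lem:hansei} on principal minors is not needed.
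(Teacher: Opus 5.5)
Your proof is correct: checking Hermiticity via $(\mat{X}\mat{Y})^\dagger=\mat{Y}^\dagger\mat{X}^\dagger$ and then rewriting $\bm{x}^\dagger\mat{A}\mat{M}\mat{A}^\dagger\bm{x}=(\mat{A}^\dagger\bm{x})^\dagger\mat{M}(\mat{A}^\dagger\bm{x})\ge 0$ is the standard argument, and you handle a non-square $\mat{A}$ correctly. The paper gives no proof of its own; it cites the fact as well known from Horn and Johnson, where it is proved this same way.
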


\begin{lem}
  \label{lem:px=0->cxy=0}
  For a positive semidefinite matrix $\mat{M} = (m_{i,j}) \in M(n)$,
  if $m_{i,i} = 0$,
  \begin{align*}
    \forall j \in \setint{n}
    ~~~
    m_{i,j} = m_{j,i} = 0
  \end{align*}
  holds.
\end{lem}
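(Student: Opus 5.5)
The plan is to prove Lemma~\ref{lem:px=0->cxy=0} by reducing to $2\times 2$ principal submatrices and using Lemma~\ref{lem:hansei}. Fix $i$ with $m_{i,i}=0$ and an arbitrary $j \in \setint{n}$. If $j=i$ there is nothing to show, since $m_{i,i}=0$ by hypothesis. So assume $j \neq i$.

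Consider the $2\times 2$ principal submatrix of $\mat{M}$ on rows and columns $\{i,j\}$:
\begin{align*}
  \begin{pmatrix}
    m_{i,i} & m_{i,j}\\
    m_{j,i} & m_{j,j}
  \end{pmatrix}
  =
  \begin{pmatrix}
    0 & m_{i,j}\\
    m_{j,i} & m_{j,j}
  \end{pmatrix}.
\end{align*}
Since $\mat{M} \ge 0$, every principal minor is nonnegative by Lemma~\ref{lem:hansei}. Equivalently, one can apply Lemma~\ref{lem:sub_seitei_kai} with $\mat{A}$ the $2\times n$ selection matrix whose rows are $e_i^\dagger$ and $e_j^\dagger$. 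Either way, this $2\times 2$ determinant is nonnegative:
\begin{align*}
  0\cdot m_{j,j} - m_{i,j} m_{j,i} \ge 0 .
\end{align*}
Because $\mat{M}$ is Hermitian, $m_{j,i} = m_{i,j}^*$. Hence the left-hand side equals $-|m_{i,j}|^2$, and the inequality gives $|m_{i,j}|^2 \le 0$. Therefore $m_{i,j}=0$, and consequently $m_{j,i}=m_{i,j}^*=0$.

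As a cross-check that avoids minors, one can instead evaluate $\bm{x}^\dagger \mat{M}\bm{x}$ at $\bm{x} = e_j - \lambda e_i$. Since $m_{i,i}=0$, this gives
\begin{align*}
  m_{j,j} - 2\,\mathrm{Re}(\lambda\, m_{j,i}) \ge 0 .
\end{align*}
Choosing $\lambda = r\, m_{j,i}^*$ with $r \to +\infty$ forces $m_{j,i}=0$.

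There is no real obstacle here. The only point needing care is invoking Hermitian symmetry, which is what turns the determinant condition into $-|m_{i,j}|^2 \ge 0$.
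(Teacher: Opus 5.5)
Your proof is correct. The paper gives no proof of this lemma. It lists it, together with Lemmas~\ref{lem:hansei} and~\ref{lem:sub_seitei_kai}, as a well-known fact about positive semidefinite matrices and cites Horn and Johnson, so there is no in-paper argument to compare against.

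Your argument is the standard textbook one. The $2\times 2$ principal minor on $\{i,j\}$ equals $-|m_{i,j}|^2$ by Hermitian symmetry, which the paper builds into its definition of $\mat{M}\ge 0$. Nonnegativity of that minor then forces $m_{i,j}=m_{j,i}=0$. You use only the easy direction of Lemma~\ref{lem:hansei}, so nothing is circular. If $i>j$, the displayed matrix differs from the principal submatrix by a simultaneous row and column swap, which leaves the determinant unchanged.

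Your quadratic-form check is also computed correctly: $\bm{x}^\dagger\mat{M}\bm{x}=m_{j,j}-2\,\mathrm{Re}(\lambda m_{j,i})$ for $\bm{x}=e_j-\lambda e_i$. It has the small advantage of working directly from the paper's definition of positive semidefiniteness without any appeal to minors.
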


The following theorem provides
the quantum states after $\QU{I}_{\setint{n+1,n+t}}$.
\begin{theo}
  \label{theo:ins_state}
  Suppose
  $\rho \in \state{l}{n}$ is given by Eq.~\eqref{eq:spec}.
  Then,
  $\sigma \in \QU{I}_{\setint{n+1,n+t}}(\rho)$ is represented by
  \begin{align}
    \sigma
    =&
    \sum_{\bm{x},\bm{y} \in \Z_{l}^{n}}
    \sqrt{ p_{\bm{x}} p_{\bm{y}} }
    \ket{\bm{x}_L}\bra{\bm{y}_L} \otimes \mat{A}_{\bm{x},\bm{y}},
  \end{align}
  where
  $\mat{A}_{\bm{x},\bm{x}} \in \state{l}{t}$ and
  $\mat{A}_{\bm{x},\bm{y}} \in M(l^t)$ satisfies
  $\mat{A}_{\bm{x},\bm{y}}^\dagger = \mat{A}_{\bm{y},\bm{x}}$ and
  $\text{tr}( \mat{A}_{\bm{x},\bm{y}} ) = 0$ ($\bm{x} \neq \bm{y}$).
\end{theo}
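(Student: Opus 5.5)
We expand an arbitrary $\sigma \in \QU{I}_{\setint{n+1,n+t}}(\rho)$ in the product basis $\{\ket{\bm{x}_L}\}\otimes(\text{basis of }\C^{l^t})$. In block form this reads
\begin{align*}
  \sigma=\sum_{\bm{x},\bm{y}\in\Z_l^n}\ket{\bm{x}_L}\bra{\bm{y}_L}\otimes \mat{B}_{\bm{x},\bm{y}},
\end{align*}
with blocks $\mat{B}_{\bm{x},\bm{y}}\in M(l^t)$. Since $\sigma$ is Hermitian, $\mat{B}_{\bm{x},\bm{y}}^\dagger=\mat{B}_{\bm{y},\bm{x}}$. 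Since the partial trace is basis independent on the traced factor, the defining condition $\QU{D}_{\setint{n+1,n+t}}(\sigma)=\rho$ is
\begin{align*}
  \sum_{\bm{x},\bm{y}}\text{tr}(\mat{B}_{\bm{x},\bm{y}})\ket{\bm{x}_L}\bra{\bm{y}_L}=\rho.
\end{align*}
Comparing with Eq.~\eqref{eq:spec}, this gives $\text{tr}(\mat{B}_{\bm{x},\bm{x}})=p_{\bm{x}}$ and $\text{tr}(\mat{B}_{\bm{x},\bm{y}})=0$ for $\bm{x}\neq\bm{y}$. Here we should first check that composing the $\text{Tr}_p$ over the last $t$ positions is indeed the usual partial trace over the last $t$ qudits. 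This holds because the $\text{Tr}_p$ act on the computational basis, and a change of basis on the first $n$ factors commutes with tracing the last factors.

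Next we normalize the blocks. For $\bm{x}$ with $p_{\bm{x}}>0$, set $\mat{A}_{\bm{x},\bm{y}}:=\mat{B}_{\bm{x},\bm{y}}/\sqrt{p_{\bm{x}}p_{\bm{y}}}$ whenever both $p_{\bm{x}},p_{\bm{y}}>0$. Then $\text{tr}(\mat{A}_{\bm{x},\bm{x}})=1$, $\text{tr}(\mat{A}_{\bm{x},\bm{y}})=0$ off the diagonal, and the adjoint relation is inherited because the scalar $\sqrt{p_{\bm{x}}p_{\bm{y}}}$ is real and symmetric. We must also show $\mat{A}_{\bm{x},\bm{x}}\ge 0$. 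For this, take $\mat{A}=\bra{\bm{x}_L}\otimes\mat{I}_{l^t}$ in Lemma~\ref{lem:sub_seitei_kai}: it extracts the diagonal block $\mat{B}_{\bm{x},\bm{x}}$ from $\sigma\ge0$, so that block is positive semidefinite. Hence $\mat{A}_{\bm{x},\bm{x}}\in\state{l}{t}$.

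The main obstacle is the indices with $p_{\bm{x}}=0$, where we cannot divide. In that case $\mat{B}_{\bm{x},\bm{x}}\ge0$ and $\text{tr}(\mat{B}_{\bm{x},\bm{x}})=0$, so $\mat{B}_{\bm{x},\bm{x}}=0$. All diagonal entries of $\sigma$ lying in the rows of that block then vanish. By Lemma~\ref{lem:px=0->cxy=0}, the entire rows and columns vanish, so $\mat{B}_{\bm{x},\bm{y}}=\mat{B}_{\bm{y},\bm{x}}=0$ for all $\bm{y}$. We can therefore choose the $\mat{A}$'s freely for these indices while keeping the identity $\sigma=\sum\sqrt{p_{\bm{x}}p_{\bm{y}}}\ket{\bm{x}_L}\bra{\bm{y}_L}\otimes\mat{A}_{\bm{x},\bm{y}}$. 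We take $\mat{A}_{\bm{x},\bm{x}}$ to be any fixed state in $\state{l}{t}$, for instance $\ket{\bm{0}}\bra{\bm{0}}$, and $\mat{A}_{\bm{x},\bm{y}}=0$ for $\bm{y}\neq\bm{x}$. These choices satisfy the trace-zero and adjoint conditions, which completes the representation.

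The statement only asserts that every $\sigma$ has this form; it does not claim that every such choice of the $\mat{A}$'s yields a valid state. So no converse positivity check is needed.
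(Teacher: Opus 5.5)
Your proof is correct and follows essentially the same argument as the paper. You block-expand $\sigma$ in the basis $\{\ket{\bm{x}_L}\otimes\ket{\bm{z}}\}$, read off the trace conditions from the partial trace, show the diagonal blocks are positive semidefinite, use Lemma~\ref{lem:px=0->cxy=0} to kill the rows and columns with $p_{\bm{x}}=0$, and then normalize. The differences are minor: you extract $\mat{B}_{\bm{x},\bm{x}}$ with the compression $\bra{\bm{x}_L}\otimes\mat{I}_{l^t}$, whereas the paper conjugates by a unitary change of basis and uses principal minors. You should say explicitly that Lemma~\ref{lem:px=0->cxy=0} is applied to the matrix of $\sigma$ in the basis $\{\ket{\bm{x}_L}\otimes\ket{\bm{z}}\}$, which is positive semidefinite by unitary conjugation, rather than to its computational-basis entries. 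Your explicit choice of the $\mat{A}$'s for zero-weight indices fills in a point the paper leaves implicit.
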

\begin{proof}
  Since $\{ \ket{\bm{x}_L} \}_{ \bm{x} \in \Z_{l}^{n} }$ is   
  an orthonormal basis for $\pure{l}{n}$,
  $\{ \ket{\bm{x}_L} \otimes \ket{ \bm{z} } \}_{\bm{x} \in \Z_{l}^{n}, \bm{z} \in \Z_l^t}$
  is
  that for $\pure{l}{(n+t)}$.
  From Eq.~\eqref{eq:orth_basis_xL},
  $\sigma \in M(l^{(n+t)})$ is rewritten as
  \begin{align*}
    \sigma
    &=
    \sum_{\bm{x},\bm{y} \in \Z_{l}^{n}}
    \sum_{\bm{z}_1,\bm{z}_2 \in \Z_l^t}
    c_{\bm{x},\bm{z}_1,\bm{y},\bm{z}_2}
    (\ket{\bm{x}_L} \otimes \ket{ \bm{z}_1 })
    (\bra{\bm{y}_L} \otimes \bra{ \bm{z}_2 })\\
    &=
    \sum_{\bm{x},\bm{y} \in \Z_{l}^{n}}
    \ket{\bm{x}_L}\bra{\bm{y}_L} \otimes
    \mat{B}_{\bm{x},\bm{y}},
  \end{align*}
  where
  $\mat{B}_{\bm{x},\bm{y}} := \sum_{\bm{z}_1,\bm{z}_2 \in \Z_l^t} c_{\bm{x},\bm{z}_1,\bm{y},\bm{z}_2} \ket{ \bm{z}_1 }\bra{ \bm{z}_2 } \in M(l^t)$.
  Now,
  we will determine the conditions on $\mat{B}_{\bm{x},\bm{y}}$ from
  \begin{align*}
    \sigma
    =
    \sum_{\bm{x},\bm{y} \in \Z_{l}^{n}}
    \ket{\bm{x}_L}\bra{\bm{y}_L} \otimes \mat{B}_{\bm{x},\bm{y}}
    \in
    \QU{I}_{\setint{n+1,n+t}}(\rho).
  \end{align*}
  Since $\QU{D}_{\setint{n+1,n+t}}(\sigma) = \rho$,
  we get
  \begin{align}
    \text{tr} ( \mat{B}_{\bm{x},\bm{y}} )
    &=
    \begin{cases}
      p_{ \bm{x} } &( \text{if~} \bm{x} = \bm{y} )\\
      0 &( \text{if~} \bm{x} \neq \bm{y} )
    \end{cases}
    \label{eq:del_after}
  \end{align}
  Note that
  $\mat{U}' := \sum_{ \bm{x} \in \Z_l^n } \ket{\bm{x}} \bra{\bm{x}_L} \otimes \mat{I}_{l^t}$
  is unitary.
  From Lemma \ref{lem:sub_seitei_kai},
  $\sigma \ge 0$
  if and only if
  \begin{align*}
    \mat{U}' \sigma \mat{U}'^\dagger 
    =
    \sum_{\bm{x},\bm{y} \in \Z_{l}^{n}}
    \ket{\bm{x}}\bra{\bm{y}} \otimes
    \mat{B}_{\bm{x},\bm{y}}
    \ge 0.
  \end{align*}
  From Lemma \ref{lem:hansei},
  $\mat{B}_{\bm{x},\bm{x}} \ge 0$ and
  all diagonal elements of $\mat{B}_{\bm{x},\bm{x}}$ are nonnegative.
  From Eq.~\eqref{eq:del_after},
  all diagonal elements of $\mat{B}_{\bm{x},\bm{x}}$ are zero
  if $p_{\bm{x}} = 0$.
  Combining this and Lemma \ref{lem:px=0->cxy=0},
  if $p_{ \bm{x} } = 0$,
  $\mat{B}_{\bm{x},\bm{y}}$ and $\mat{B}_{\bm{y},\bm{x}}$ are equal to
  the zero matrix
  for any $\bm{y} \in \Z_q^n$.
  Hence,
  \begin{align*}
    \sigma
    &=
    \sum_{\bm{x},\bm{y} \in \Z_{l}^{n}}
    \sqrt{ p_{\bm{x}} p_{\bm{y}} }
    \ket{\bm{x}_L}\bra{\bm{y}_L} \otimes \mat{A}_{\bm{x},\bm{y}},
  \end{align*}
  where
  $\mat{A}_{\bm{x},\bm{y}} := \mat{B}_{\bm{x},\bm{y}} / \sqrt{ p_{\bm{x}} p_{\bm{y}} } \in M(l^t)$
  if
  $p_{\bm{x}} \neq 0$ and $p_{\bm{y}} \neq 0$.
  Since $\mat{B}_{\bm{x},\bm{x}} \ge 0$ and $\sigma$ is Hermitian,
  we get
  $\mat{A}_{\bm{x},\bm{x}} \ge 0$ and
  $\mat{A}_{\bm{x},\bm{y}}^\dagger = \mat{A}_{\bm{y},\bm{x}}$. 
  In addition,
  Eq.~\eqref{eq:del_after} yields
  \begin{align*}    
    \text{tr} ( \mat{A}_{\bm{x},\bm{y}} )
    &=
    \frac{1}{ \sqrt{ p_{\bm{x}} p_{\bm{y}} } }
    \text{tr}
    \left( \mat{B}_{\bm{x},\bm{y}} \right)
    =
    \begin{cases}
      1 & (\text{if~} \bm{x} = \bm{y})\\
      0 & (\text{if~} \bm{x} \neq \bm{y})
    \end{cases}.
  \end{align*}
  From the above,
  $\mat{A}_{\bm{x},\bm{x}} \in \state{l}{t}$ holds.
\end{proof}

By applying the index permutation to Theorem \ref{theo:ins_state},
any insertion errors for any insertion positions can be
expressed as follows:
\begin{coro}
  \label{coro:ins_state}
  Define $Q, \tau^Q(\rho)$ as in Theorem \ref{theo:after_ins}
  and
  $\mat{A}_{\bm{x},\bm{y}} \in M(l^t)$ as in Theorem \ref{theo:ins_state}.
  Then,
  for a quantum state $\rho \in \state{l}{n}$ in Eq.~\eqref{eq:spec},
  $\sigma \in \QU{I}_Q(\rho)$ is represented by
  \begin{align}
    \sigma
    =
    \tau^Q
    &
    \bigg(
    \sum_{\bm{x},\bm{y} \in \Z_{l}^{n}}
    (
    \sqrt{ p_{\bm{x}} p_{\bm{y}} }
    \ket{\bm{x}_L}\bra{\bm{y}_L} \otimes \mat{A}_{\bm{x},\bm{y}}
    )
    \bigg).
    \label{eq:theo:insstate_con_Q}
  \end{align}
\end{coro}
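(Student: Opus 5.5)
The plan is to reduce Corollary~\ref{coro:ins_state} to Theorem~\ref{theo:ins_state} by conjugating with the index permutation $\tau^Q$. Two facts about index permutations will be needed.

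First, each index permutation is implemented by a unitary matrix. Let $\mat{P}_\tau$ be the permutation matrix with $\mat{P}_\tau \ket{x_1\cdots x_m} = \ket{x_{\tau(1)}\cdots x_{\tau(m)}}$. Then $\tau(\rho) = \mat{P}_\tau \rho \mat{P}_\tau^\dagger$. Consequently $\tau$ is a bijection of $\state{l}{m}$ onto itself, its inverse is the index permutation by $\tau^{-1}$, and positivity and trace are preserved (using Lemma~\ref{lem:sub_seitei_kai}).

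Second, I will check how $\tau^Q$ interacts with partial traces, aiming for the identity
\begin{align*}
  \QU{D}_Q\big(\tau^Q(\omega)\big) = \QU{D}_{\setint{n+1,n+t}}(\omega)
\end{align*}
for every $\omega \in M(l^{n+t})$. By linearity it suffices to verify this on $\omega = \ket{\bm{x}}\bra{\bm{y}}$. On such an operator the coordinates in positions $\setint{n+1,n+t}$ are moved by $\tau^Q$ to positions $q_1<\cdots<q_t$. The coordinates in $\setint{n}$ fill the complementary positions and keep their relative order, because $\tau^Q$ is order-preserving on $\setint{n}$. Tracing out the positions $Q$ therefore contracts exactly the pairs $(x_i,y_i)$ with $i>n$. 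It leaves $\ket{x_1\cdots x_n}\bra{y_1\cdots y_n}$, which is what $\QU{D}_{\setint{n+1,n+t}}$ produces.

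I expect this bookkeeping, namely confirming the direction of $\tau^Q$ in the abused notation $\ket{x_{\tau(1)}\cdots}$, to be the only delicate step. If the convention turns out reversed, I will replace $\tau^Q$ by its inverse in the identity; the structure of the argument does not change.

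With these two facts the proof is direct. Let $\sigma \in \QU{I}_Q(\rho)$ and set $\omega := (\tau^Q)^{-1}(\sigma)$, which is a state by the first fact. By the second fact, $\QU{D}_{\setint{n+1,n+t}}(\omega) = \QU{D}_Q(\sigma) = \rho$, so $\omega \in \QU{I}_{\setint{n+1,n+t}}(\rho)$. Theorem~\ref{theo:ins_state} then gives
\begin{align*}
  \omega = \sum_{\bm{x},\bm{y}} \sqrt{p_{\bm{x}} p_{\bm{y}}}\, \ket{\bm{x}_L}\bra{\bm{y}_L} \otimes \mat{A}_{\bm{x},\bm{y}}
\end{align*}
with the stated conditions on $\mat{A}_{\bm{x},\bm{y}}$. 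Applying $\tau^Q$ to $\omega$ recovers $\sigma$ in the form of Eq.~\eqref{eq:theo:insstate_con_Q}.
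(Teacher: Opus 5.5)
Your proof is the paper's argument made explicit. The paper justifies the corollary only with the phrase ``applying the index permutation to Theorem~\ref{theo:ins_state}'', and your pull-back $\omega=(\tau^Q)^{-1}(\sigma)$, the identity $\QU{D}_Q\circ\tau^Q=\QU{D}_{\setint{n+1,n+t}}$, and the push-forward supply exactly the missing details.

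On the direction you left open: read literally, $\ket{x_{\tau(1)}\cdots x_{\tau(m)}}$ (and hence your own $\mat{P}_\tau$) moves the qudit in position $\tau(k)$ to position $k$. That is the opposite of what your second fact uses. Under that literal reading the displayed formula is false: for $\rho=\ket{01}\bra{01}$ and $Q=\{1\}$ it gives $\ket{1}\bra{1}\otimes\mat{A}\otimes\ket{0}\bra{0}\notin\QU{I}_{\{1\}}(\rho)$. So your fallback of switching to the inverse would only yield $\sigma=(\tau^Q)^{-1}(\cdots)$, not the stated formula. The intended convention, which is also the one Example~\ref{exam:ins_set} uses, is the one in your second fact (position $j\mapsto\tau^Q(j)$). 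Define $\mat{P}_\tau\ket{x_1\cdots x_m}=\ket{x_{\tau^{-1}(1)}\cdots x_{\tau^{-1}(m)}}$ instead of hedging, and the proof is complete.
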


\begin{rem}
  If $\rho$ is pure,
  we get
  Eq.\eqref{eq:pure_ins} in Theorem \ref{theo:after_ins}
  from Eq.\eqref{eq:theo:insstate_con_Q} in Corollary \ref{coro:ins_state}.
\end{rem}

Noting that
Eq.~\eqref{eq:theo:insstate_con_Q} does not guarantee $\sigma \ge 0$,
we get from Corollary \ref{coro:ins_state} as follows:
\begin{coro}
  \label{coro:insset}
  Define $Q, \tau^Q(\rho)$ as in Theorem \ref{theo:after_ins}
  and
  $\rho \in \state{l}{n}$ as in Eq.~\eqref{eq:spec}.
  Then,
  \begin{align*}
    \QU{I}_Q(\rho)
    =
    \Bigg\{
    &\tau^Q
    \bigg(
    \sum_{\bm{x},\bm{y} \in \Z_{l}^{n}}
    (
    \sqrt{ p_{\bm{x}} p_{\bm{y}} }
    \ket{\bm{x}_L}\bra{\bm{y}_L} \otimes \mat{A}_{\bm{x},\bm{y}}
    )
    \bigg)
    \R{ \ge 0 }\\
    &\Big|
    \mat{A}_{\bm{x},\bm{y}} \in M(l^t),
    \text{tr}( \mat{A}_{\bm{x},\bm{y}} ) = 0 ~ (\bm{x} \neq \bm{y}),\\
    &~\mat{A}_{\bm{x},\bm{y}}^\dagger = \mat{A}_{\bm{y},\bm{x}},
    \mat{A}_{\bm{x},\bm{x}} \in \state{l}{t}
    \Bigg\}.
  \end{align*}
\end{coro}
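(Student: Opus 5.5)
We prove Corollary \ref{coro:insset} by showing two inclusions between $\QU{I}_Q(\rho)$ and the displayed set, which we call $\mathcal{S}$.

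\textbf{The inclusion $\QU{I}_Q(\rho) \subset \mathcal{S}$.} This is Corollary \ref{coro:ins_state}. Any $\sigma \in \QU{I}_Q(\rho)$ has the form \eqref{eq:theo:insstate_con_Q}, with matrices $\mat{A}_{\bm{x},\bm{y}}$ satisfying the conditions listed in Theorem \ref{theo:ins_state}. Since $\sigma$ is a density matrix, $\sigma \ge 0$. Hence $\sigma$ meets every requirement defining $\mathcal{S}$.

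\textbf{The inclusion $\mathcal{S} \subset \QU{I}_Q(\rho)$.} Take any $\mat{A}_{\bm{x},\bm{y}}$ satisfying the listed conditions, and suppose the resulting matrix $\sigma := \tau^Q(\tilde\sigma)$ is positive semidefinite, where
\begin{align*}
  \tilde\sigma := \sum_{\bm{x},\bm{y}} \sqrt{p_{\bm{x}} p_{\bm{y}}}\, \ket{\bm{x}_L}\bra{\bm{y}_L} \otimes \mat{A}_{\bm{x},\bm{y}}.
\end{align*}
We check three things.

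First, $\sigma$ is Hermitian. The relation $\mat{A}_{\bm{x},\bm{y}}^\dagger = \mat{A}_{\bm{y},\bm{x}}$ gives $\tilde\sigma^\dagger = \tilde\sigma$, and the index permutation $\tau^Q$ is conjugation by a permutation (unitary) matrix, which preserves Hermiticity.

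Second, $\QU{D}_Q(\sigma) = \rho$. By the definition of $\tau^Q$, the positions in $Q$ of $\sigma$ are exactly the last $t$ tensor slots of $\tilde\sigma$. So $\QU{D}_Q(\sigma) = \QU{D}_{\setint{n+1,n+t}}(\tilde\sigma)$, which is the partial trace over the second tensor factor. Tracing termwise gives
\begin{align*}
  \sum_{\bm{x},\bm{y}} \sqrt{p_{\bm{x}} p_{\bm{y}}}\, \text{tr}(\mat{A}_{\bm{x},\bm{y}}) \ket{\bm{x}_L}\bra{\bm{y}_L}.
\end{align*}
The trace conditions ($\text{tr}(\mat{A}_{\bm{x},\bm{x}}) = 1$, and $\text{tr}(\mat{A}_{\bm{x},\bm{y}}) = 0$ for $\bm{x} \neq \bm{y}$) reduce this sum to $\sum_{\bm{x}} p_{\bm{x}} \ket{\bm{x}_L}\bra{\bm{x}_L} = \rho$, using Eq.~\eqref{eq:spec}.

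Third, $\sigma$ is a density matrix. Partial trace preserves the total trace, so $\text{tr}(\sigma) = \text{tr}(\rho) = 1$. Combined with the assumed $\sigma \ge 0$, this gives $\sigma \in \state{l}{(n+t)}$. Together with the second step, $\sigma \in \QU{I}_Q(\rho)$.

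\textbf{The main obstacle.} The step needing care is the identity $\QU{D}_Q \circ \tau^Q = \QU{D}_{\setint{n+1,n+t}}$ on product-form operators. It requires matching the convention in the definition of $\tau^Q$ (whether $x_{\tau(i)}$ means slot $i$ is filled from slot $\tau(i)$, or the reverse). We will verify it directly on basis elements: write $\ket{\bm{x}_L}$ in the computational basis and check $\ket{\bm{u}}\bra{\bm{v}} \otimes \ket{\bm{z}_1}\bra{\bm{z}_2}$, then extend by linearity. The same convention is already used implicitly in Corollary \ref{coro:ins_state}, so adopting it consistently settles the matter.

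The positivity condition cannot be removed, since the listed conditions on the $\mat{A}_{\bm{x},\bm{y}}$ alone do not force $\sigma \ge 0$. This is why the set description must state $\sigma \ge 0$ explicitly.
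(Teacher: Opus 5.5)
Your proof is correct and follows the paper's route. The paper obtains the corollary directly from Corollary \ref{coro:ins_state}, which gives $\QU{I}_Q(\rho)\subset$ the displayed set, together with the remark that the displayed form does not by itself guarantee $\sigma\ge 0$, so positivity has to be imposed explicitly. Your check of the reverse inclusion (Hermiticity, $\QU{D}_Q(\sigma)=\rho$ from the trace conditions, and unit trace) fills in what the paper leaves implicit. Your caution about $\tau^Q$ is also well placed: the convention must be read so that $\tau^Q$ moves the last $t$ slots to the positions in $Q$, as in Example \ref{exam:ins_set}.
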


\begin{exam}
  \label{exam:ins_set}
  For $p_0, p_1 \in \NNR$,
  define
  \begin{align*}
    \rho
    :=
    p_0\ket{00}\bra{00}
    +
    p_1\ket{11}\bra{11}
    \in
    \state{2}{2}.
  \end{align*}
  Then,
  from Corollary \ref{coro:ins_state},
  $\sigma_i \in \QU{I}_{\{i\}}(\rho)$ with $i=1,2,3$ are
  \begin{align*}
    \sigma_1
    =&
    p_0 ~ \pi_{00} \otimes \ket{00}\bra{00}
    +
    p_1 ~ \pi_{11} \otimes \ket{11}\bra{11}\\
    &+
    \sqrt{p_1 p_0} \mat{A} \otimes \ket{11}\bra{00}
    +
    \sqrt{p_0 p_1} \mat{A}^\dagger \otimes \ket{00}\bra{11},\\
    \sigma_2
    =&
    p_0 \ket{0}\bra{0} \otimes \pi_{00} \otimes \ket{0}\bra{0}
    +
    p_1 \ket{1}\bra{1} \otimes \pi_{11} \otimes \ket{1}\bra{1}\\
    &+
    \sqrt{p_1 p_0} \ket{1}\bra{0} \otimes \mat{A} \otimes \ket{1}\bra{0}\\
    &+
    \sqrt{p_0 p_1} \ket{0}\bra{1} \otimes \mat{A}^\dagger \otimes \ket{0}\bra{1},\\
    \sigma_3
    =&
    p_0 \ket{00}\bra{00} \otimes \pi_{00}
    +
    p_1 \ket{11}\bra{11} \otimes \pi_{11}\\
    &+
    \sqrt{p_1 p_0} \ket{11}\bra{00} \otimes \mat{A}
    +
    \sqrt{p_0 p_1} \ket{00}\bra{11} \otimes \mat{A}^\dagger,
  \end{align*}
  where
  $\pi_{00},\pi_{11} \in S(\C^2)$ and 
  $\mat{A} \in M(2)$ such that $\text{tr}(\mat{A}) = 0$.
\end{exam}

\section{Quantum States after Composite Error}
\label{sec:ins_del_state}
This section
proves that any $(s,t)$-error is included in $\QU{I}^t \circ \QU{D}^s$.
We give the following lemma for proving this.

\begin{lem}
  \label{lem:1ins_1del_kai}
  Any set $\QU{X} \subset \state{l}{n}$ satisfies
  \begin{align*}
    \QU{D}^1 \circ \QU{I}^1(\QU{X})
    \subset
    \QU{I}^1 \circ \QU{D}^1(\QU{X}).
  \end{align*}
\end{lem}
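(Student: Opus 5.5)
The plan is to work directly from the definitions of $\QU{D}_P$ and $\QU{I}_Q$, without the explicit form of Corollary \ref{coro:insset}. The only facts needed are that partial traces over distinct qudits commute, up to relabelling of indices, and that partial traces map density matrices to density matrices.

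Fix $\rho \in \QU{X}$, a position $q \in \setint{n+1}$, a state $\sigma \in \QU{I}_{\{q\}}(\rho)$, so $\sigma \in \state{l}{(n+1)}$ with $\text{Tr}_q(\sigma)=\rho$, and a deletion position $p \in \setint{n+1}$. Put $\omega := \text{Tr}_p(\sigma) \in \state{l}{n}$. It suffices to find $p' \in \setint{n}$ and $q' \in \setint{n}$ with $\text{Tr}_{q'}(\omega) = \text{Tr}_{p'}(\rho)$. Indeed, this equality says exactly that $\omega \in \QU{I}_{\{q'\}}(\QU{D}_{\{p'\}}(\rho))$, hence $\omega \in \QU{I}^1 \circ \QU{D}^1(\rho)$. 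We split into two cases.

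\emph{Case $p=q$.} Then $\omega = \rho$. Choose any $j \in \setint{n}$; this needs $n \ge 1$, and the degenerate case $n=0$ should be excluded or treated separately, since then $\QU{D}^1$ is undefined. Trivially $\text{Tr}_j(\rho)=\text{Tr}_j(\rho)$, so $\rho \in \QU{I}_{\{j\}}(\text{Tr}_j(\rho)) \subset \QU{I}^1\circ\QU{D}^1(\rho)$.

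\emph{Case $p\neq q$.} Let $q'$ be the position of the inserted qudit after qudit $p$ is removed, and $p'$ the position of the deleted qudit inside $\rho$:
\begin{align*}
  q' := \begin{cases} q & (q<p)\\ q-1 & (q>p)\end{cases},
  \qquad
  p' := \begin{cases} p & (p<q)\\ p-1 & (p>q)\end{cases}.
\end{align*}
The key identity is $\text{Tr}_{q'} \circ \text{Tr}_p = \text{Tr}_{p'} \circ \text{Tr}_q$ on $M(l^{(n+1)})$. Both sides trace out qudits $p$ and $q$ of the original $(n+1)$-qudit system, and both keep the remaining qudits in their original order. This can be checked on the basis elements $\ket{\bm{x}}\bra{\bm{y}}$ from the definition of $\text{Tr}_p$: each side yields $\delta_{x_p,y_p}\delta_{x_q,y_q}\ket{\bm{x}_{\neg\{p,q\}}}\bra{\bm{y}_{\neg\{p,q\}}}$, and linearity then gives the identity in general. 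Applying it to $\sigma$ gives $\text{Tr}_{q'}(\omega) = \text{Tr}_{p'}(\text{Tr}_q(\sigma)) = \text{Tr}_{p'}(\rho)$, which is what we needed.

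Since $\rho$ and $\sigma$ were arbitrary, the inclusion holds for all of $\QU{X}$. The only delicate step is the index bookkeeping in the commutation identity, namely the four subcases $p<q$ versus $p>q$ that fix $p'$ and $q'$. No positivity argument is needed, because $\omega$ is a partial trace of a density matrix and is therefore automatically in $\state{l}{n}$.
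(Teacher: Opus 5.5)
Your proof is correct and is essentially the paper's argument. The paper also fixes the insertion and deletion positions. It handles the coincident case via the trivial membership $\omega\in\QU{I}_P\circ\QU{D}_P(\omega)$ (Lemma \ref{lem_ID=non}), and in the distinct case uses the commutation of partial traces $\QU{D}_p\circ\QU{D}_q=\QU{D}_{q-1}\circ\QU{D}_p$, which is exactly your identity $\text{Tr}_{q'}\circ\text{Tr}_p=\text{Tr}_{p'}\circ\text{Tr}_q$. Your choice of an arbitrary $j\in\setint{n}$ when $p=q$ is slightly cleaner, since it avoids deleting at position $n+1$ of an $n$-qudit state, and your argument covers $n=1$ without the paper's separate case.
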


Before proving it,
we give an example.

\begin{exam}
  \label{exam:swap_TinsTdel}
  Define $\sigma_1$, $\sigma_2$, $\sigma_3$ and $\rho$ as in
  Example \ref{exam:ins_set}.
  Then,
  we get
  \begin{align*}
    \QU{D}_{\{1\}}(\sigma_1)
    =
    \QU{D}_{\{2\}}(\sigma_2)
    &=
    \QU{D}_{\{3\}}(\sigma_3)
    =
    \rho,\\
    \QU{D}_{\{2\}}(\sigma_1)
    =
    \QU{D}_{\{3\}}(\sigma_1)
    &=
    \QU{D}_{\{1\}}(\sigma_2)\\
    &=
    p_0 ~ \pi_{00} \otimes \ket{0}\bra{0}
    +
    p_1 ~ \pi_{11} \otimes \ket{1}\bra{1},\\
    \QU{D}_{\{3\}}(\sigma_2)
    =
    \QU{D}_{\{1\}}(\sigma_3)
    &=
    \QU{D}_{\{2\}}(\sigma_3)\\
    &=
    p_0 \ket{0}\bra{0} \otimes \pi_{00}
    +
    p_1 \ket{1}\bra{1} \otimes \pi_{11}.
  \end{align*}
  Hence,
  \begin{align*}
    &\QU{D}^1 \circ \QU{I}^1(\rho)\\
    &=
    \{ \rho \}\\
    &~~~\cup
    \left\{
    p_0 \pi_{00} \otimes \ket{0}\bra{0}
    +
    p_1 \pi_{11} \otimes \ket{1}\bra{1}
    \mid
    \pi_{00}, \pi_{11} \in S(\C^2)
    \right\}\\
    &~~~\cup
    \left\{
    p_0 \ket{0}\bra{0} \otimes \pi_{00}
    +
    p_1 \ket{1}\bra{1} \otimes \pi_{11}
    \mid
    \pi_{00}, \pi_{11} \in S(\C^2)
    \right\}
  \end{align*}
  Since
  $\QU{D}^1(\rho) = \{ p_0 \ket{0}\bra{0} + p_1 \ket{1}\bra{1} \}$,
  from Corollary \ref{coro:insset},
  we have
  \begin{align*}
    \QU{I}_{\{1\}} \circ &\QU{D}^1(\rho)\\
    =
    \Big\{
    &p_0 ~ \pi_{00} \otimes \ket{0}\bra{0}
    +p_1 ~ \pi_{11} \otimes \ket{1}\bra{1}\\
    &+ \sqrt{p_1p_0} \mat{A} \otimes \ket{1}\bra{0}
    + \sqrt{p_0p_1} \mat{A}^\dagger \otimes \ket{0}\bra{1} \ge 0\\
    &\mid
    \pi_{00}, \pi_{11} \in \state{2}{1},
    \mat{A} \in M(2),
    \text{tr}(\mat{A}) = 0
    \Big\},\\
    \QU{I}_{\{2\}} \circ &\QU{D}^1(\rho)\\
    =
    \Big\{
    &p_0 \ket{0}\bra{0} \otimes \pi_{00}
    +p_1 \ket{1}\bra{1} \otimes \pi_{11}\\
    &+ \sqrt{p_1p_0} \ket{1}\bra{0} \otimes \mat{A}
    + \sqrt{p_0p_1} \ket{0}\bra{1} \otimes \mat{A}^\dagger \ge 0\\
    &\mid
    \pi_{00}, \pi_{11} \in \state{2}{1},
    \mat{A} \in M(2),
    \text{tr}(\mat{A}) = 0
    \Big\}.
  \end{align*}
  From the above,
  $\QU{D}^1 \circ \QU{I}^1(\rho) \subset \QU{I}^1 \circ \QU{D}^1(\rho)$ holds.
  When $p_0, p_1 \neq 0$,
  $\QU{D}^1 \circ \QU{I}^1(\rho) \neq \QU{I}^1 \circ \QU{D}^1(\rho)$ holds
  since
  \begin{align*}
    \ket{\psi}
    :=
    \sqrt{p_0} 
    \ket{01}
    +
    \sqrt{p_1} 
    \ket{10}
    \in
    \QU{I}^1 \circ \QU{D}^1(\rho) \setminus \QU{D}^1 \circ \QU{I}^1(\rho).
  \end{align*} 
\end{exam}

\begin{rem}
  For some $\rho$,
  $\QU{D}^1 \circ \QU{I}^1(\rho) = \QU{I}^1 \circ \QU{D}^1(\rho)$ holds.
  For example,
  $\rho = \ket{00}\bra{00}$ satisfies
  \begin{align*}
    &\QU{D}^1 \circ \QU{I}^1(\rho)
    =
    \QU{I}^1 \circ \QU{D}^1(\rho)\\
    &=
    \bigcup_{\pi \in S(\C^2)}
    \{
    \ket{0}\bra{0} \otimes \pi,
    \pi \otimes \ket{0}\bra{0} \}.
  \end{align*}
\end{rem}

We give the following lemma to prove Lemma \ref{lem:1ins_1del_kai}.

\begin{lem}
  \label{lem_ID=non}
  For $P \in \subint{n}{t}$,
  $\QU{X} \subset \state{l}{n}$ satisfies
  \begin{align*}
    \QU{X}
    \subset
    \QU{I}_P \circ \QU{D}_P(\QU{X}).
  \end{align*} 
\end{lem}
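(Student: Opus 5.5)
The inclusion is essentially definitional, so the plan is to unfold the definitions of $\QU{I}_P$ and of the composite error and check membership directly. By the definition of $\QU{E}(\QU{X})$ as a union over $\rho \in \QU{X}$, it suffices to fix a single $\rho \in \QU{X} \subset \state{l}{n}$ and show $\rho \in \QU{I}_P \circ \QU{D}_P(\rho)$.

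Set $\sigma := \QU{D}_P(\rho)$. Since $P \in \subint{n}{t}$ and each partial trace $\text{Tr}_{p}$ maps density matrices to density matrices, we have $\sigma \in \state{l}{(n-t)}$. By definition,
\begin{align*}
  \QU{I}_P(\sigma)
  =
  \{ \sigma' \in \state{l}{n} \mid \QU{D}_P(\sigma') = \sigma \},
\end{align*}
where $P \in \subint{(n-t)+t}{t}$ is indeed an admissible insertion-position set. Taking $\sigma' := \rho$ gives $\rho \in \state{l}{n}$ and $\QU{D}_P(\rho) = \sigma$. Hence $\rho \in \QU{I}_P(\QU{D}_P(\rho)) = \QU{I}_P \circ \QU{D}_P(\rho)$. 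Taking the union over $\rho \in \QU{X}$ then yields $\QU{X} \subset \QU{I}_P \circ \QU{D}_P(\QU{X})$.

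No real obstacle is expected. The only point needing care is bookkeeping: checking that $\QU{D}_P(\rho)$ is a genuine density matrix on $n-t$ qudits (positivity and unit trace are preserved by partial traces), and that the same index set $P$ is interpreted consistently as both deletion positions in $\setint{n}$ and insertion positions in $\setint{(n-t)+t}$. Corollary \ref{coro:insset} is not needed, although one could alternatively exhibit $\rho$ in the form of Corollary \ref{coro:ins_state} as a sanity check.
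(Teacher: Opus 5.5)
Your proposal is correct and is the paper's proof: both fix $\rho \in \QU{X}$, note that $\QU{I}_P(\QU{D}_P(\rho)) = \{\pi \in \state{l}{n} \mid \QU{D}_P(\pi) = \QU{D}_P(\rho)\}$ trivially contains $\rho$, and then take the union over $\QU{X}$. Your extra bookkeeping, that $\QU{D}_P(\rho)$ is a density matrix on $n-t$ qudits and that $P \in \subint{n}{t}$ is an admissible insertion set, is left implicit in the paper.
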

\begin{proof}
  Since $\rho \in \QU{X} \subset \state{l}{n}$ satisfies
  \begin{align*}
    \QU{I}_P \circ \QU{D}_P(\{\rho\})
    &=
    \QU{I}_P( \QU{D}_P( \rho ) )\\
    &=
    \{
    \pi \in \state{l}{n}
    \mid
    \QU{D}_P(\pi) = \QU{D}_P(\rho)
    \}
    \ni
    \rho,
  \end{align*}
  we get
  $\QU{X} \subset \QU{I}_P \circ \QU{D}_P(\QU{X})$.
\end{proof}

\begin{proof}[Proof of Lemma \ref{lem:1ins_1del_kai}]
  To simplify the notation,
  we write $\QU{D}_{\{ p \}}$ as $\QU{D}_p$
  and
  $\QU{I}_{\{ q \}}$ as $\QU{I}_q$.
  It suffices to prove that
  for $\rho \in \state{l}{n}$,
  \begin{align}
    \forall
    p,q \in \setint{n+1}
    ~~~
    \QU{D}_q \circ \QU{I}_p( \rho )
    \subset
    \QU{I}^1 \circ \QU{D}^1( \rho ).
    \label{eq:juubunn}    
  \end{align}
  
  If $n = 1$,
  Eq.~\eqref{eq:juubunn} holds
  since
  $\QU{I}^1 \circ \QU{D}^1( \rho ) = \state{l}{1}$.
  Hence,
  consider the case for $n \ge 2$.
  If $p = q$,
  from Lemma \ref{lem_ID=non},
  Eq.~\eqref{eq:juubunn} follows
  \begin{align*}
    \QU{D}_p \circ \QU{I}_p( \rho )
    &=
    \{ \rho \}
    \subset
    \QU{I}_p \circ \QU{D}_p( \rho ). 
  \end{align*}
  For $p < q$,
  by substituting $\QU{X} = \QU{D}_q \circ \QU{I}_p( \rho )$ into Lemma \ref{lem_ID=non},
  we have 
  \begin{align*}
    \QU{D}_q \circ \QU{I}_p( \rho )
    &\subset
    \QU{I}_p \circ \QU{D}_p \circ \QU{D}_q \circ \QU{I}_p( \rho )\\
    &=
    \QU{I}_p \circ \QU{D}_{q-1} \circ \QU{D}_p \circ \QU{I}_p( \rho )\\
    &=
    \QU{I}_p \circ \QU{D}_{q-1}( \rho )\\
    &\subset
    \QU{I}^1 \circ \QU{D}^1( \rho ),
  \end{align*}
  where
  the first equality
  follows
  from $\QU{D}_p \circ \QU{D}_q = \QU{D}_{q-1} \circ \QU{D}_p$.
  Similarly,
  we get Eq.~\eqref{eq:juubunn}
  for $p > q$.
  Thus,
  we get Eq.~\eqref{eq:juubunn}.
\end{proof}

Lemma \ref{lem:1ins_1del_kai} is extended
to multiple deletion and insertion errors as follows.

\begin{lem}
  \label{lem:swap_TinsTdel_kai}
  For $s, t \in \N$,
  $\QU{X} \subset \state{l}{n}$ satisfies
  \begin{align*}
    \QU{D}^s \circ \QU{I}^t(\QU{X})
    \subset
    \QU{I}^t \circ \QU{D}^s(\QU{X}).
  \end{align*}
\end{lem}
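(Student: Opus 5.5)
The plan is to deduce Lemma \ref{lem:swap_TinsTdel_kai} from Lemma \ref{lem:1ins_1del_kai} by repeatedly swapping adjacent single-deletion and single-insertion steps, as in a bubble sort. First we record how multiple errors decompose into single ones: $\QU{D}^s = (\QU{D}^1)^{\circ s}$ and $\QU{I}^t = (\QU{I}^1)^{\circ t}$, both as set-valued maps applied to sets of states. For deletions this holds because $\QU{D}_P = \text{Tr}_{p_1}\circ\cdots\circ\text{Tr}_{p_s}$. Conversely, any composition of $s$ single partial traces equals $\QU{D}_P$ for some $P\in\subint{n}{s}$, by the re-indexing rule $\QU{D}_p\circ\QU{D}_q=\QU{D}_{q-1}\circ\QU{D}_p$ used in the proof of Lemma \ref{lem:1ins_1del_kai}.

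For insertions, we prove $\QU{I}^{t}=\QU{I}^1\circ\QU{I}^{t-1}$ by induction on $t$. Take $\sigma\in\QU{I}_Q(\rho)$ with $Q=\{q_1<\cdots<q_t\}$. Then $\sigma':=\QU{D}_{q_t}(\sigma)$ satisfies $\QU{D}_{Q'}(\sigma')=\rho$, where $Q'=\{q_1,\dots,q_{t-1}\}$; this holds because $\QU{D}_Q=\QU{D}_{Q'}\circ\text{Tr}_{q_t}$ when the traces are taken in increasing order of index. So $\sigma'\in\QU{I}_{Q'}(\rho)$ and $\sigma\in\QU{I}_{q_t}(\sigma')$. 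For the reverse inclusion, take $\sigma\in\QU{I}_q(\sigma')$ with $\sigma'\in\QU{I}_{Q'}(\rho)$. Then $\QU{D}_{Q''}(\sigma)=\rho$, where $Q''$ is $Q'$ shifted past $q$ together with $q$, again by the re-indexing rule. Only the inclusion $\supseteq$ of $\QU{I}^1\circ\QU{I}^{t-1}$ in $\QU{I}^t$ is strictly needed, plus $\QU{I}^t\subset(\QU{I}^1)^{\circ t}$ to start the argument. We also need monotonicity: if $\QU{X}\subset\QU{Y}$ then $\QU{E}(\QU{X})\subset\QU{E}(\QU{Y})$, which is immediate from the definition as a union.

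With these facts, $\QU{D}^s\circ\QU{I}^t(\QU{X})$ is contained in the word $(\QU{D}^1)^{s}(\QU{I}^1)^{t}$ applied to $\QU{X}$. Each application of Lemma \ref{lem:1ins_1del_kai} replaces a factor $\QU{D}^1\circ\QU{I}^1$ by $\QU{I}^1\circ\QU{D}^1$ and can only enlarge the resulting set. Here the lemma is applied to the intermediate set $\QU{Y}=(\QU{I}^1)^{k}(\QU{D}^1)^{j}(\QU{X})$, which consists of states of a single length, and monotonicity propagates the inclusion through the remaining outer maps. An induction on the number of inversions (pairs consisting of a $\QU{D}^1$ to the left of an $\QU{I}^1$), which is $st$, moves every deletion to the right. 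The result is $(\QU{I}^1)^{t}(\QU{D}^1)^{s}(\QU{X})=\QU{I}^t\circ\QU{D}^s(\QU{X})$. Concretely, we can write the argument as a double induction: first on $t$ with $s=1$, giving $\QU{D}^1\circ\QU{I}^t\subset\QU{I}^t\circ\QU{D}^1$, and then on $s$.

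The main obstacle is the bookkeeping in the decomposition $\QU{I}^t=(\QU{I}^1)^{\circ t}$. We must check that iterated single insertions at arbitrary positions give exactly the states $\sigma$ with $\QU{D}_Q(\sigma)=\rho$ for some $Q$, which requires care with the index shifts in $\QU{D}_P$. Lemma \ref{lem:1ins_1del_kai} also needs $n\ge 1$ for the intermediate sets; we must make sure the word never deletes from the empty system (when $s>n$ the statement is vacuous). Note that Lemma \ref{lem:1ins_1del_kai} is applied to mixed intermediate states, which it allows since it holds for arbitrary $\QU{X}$.
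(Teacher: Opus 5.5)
Your proposal is correct and follows essentially the paper's argument: both reduce to single errors and repeatedly apply Lemma~\ref{lem:1ins_1del_kai} to swap adjacent $\QU{D}^1\circ\QU{I}^1$ pairs, with only a mirror-image ordering (the paper pushes each $\QU{I}^1$ left through the block $\QU{D}^s$, you push each $\QU{D}^1$ right through $\QU{I}^t$), and you make explicit the facts $\QU{D}^s=(\QU{D}^1)^{\circ s}$, $\QU{I}^t=(\QU{I}^1)^{\circ t}$, monotonicity, and the length bookkeeping that the paper uses tacitly. One small correction to your aside: for $n<s\le n+t$ the statement is not vacuous, because the left side is then nonempty (for nonempty $\QU{X}$) while $\QU{D}^s(\QU{X})$ is undefined or empty, so the lemma must be read with the implicit hypothesis $s\le n$; that hypothesis is exactly what guarantees every intermediate set to which you apply Lemma~\ref{lem:1ins_1del_kai} has length at least $n-s+1\ge 1$.
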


\begin{proof}
  From Lemma \ref{lem:1ins_1del_kai},
  we get
  \begin{align*}
    \QU{D}^{s} \circ \QU{I}^{t}(\QU{X})
    &=
    \underbrace{
      \QU{D}^{1} \circ \cdots \circ \QU{D}^{1} \circ \B{\QU{D}^{1}}
    }_{s~\text{times}}
    \circ 
    \R{\QU{I}^{1}} \circ \QU{I}^{t-1} 
    (\QU{X})\\
    &\subset
    \QU{D}^{1} \circ \cdots \circ \QU{D}^{1} \circ \R{\QU{I}^{1}}
    \circ 
    \B{\QU{D}^{1}} \circ \QU{I}^{t-1} 
    (\QU{X})\\
    &\subset
    \QU{D}^{1} \circ \cdots \circ \R{\QU{I}^{1}} \circ \QU{D}^{1}
    \circ 
    \QU{D}^{1} \circ \QU{I}^{t-1} 
    (\QU{X})\\
    &~~~~~\vdots\\
    &\subset
    \R{\QU{I}^{1}} \circ \cdots \circ \QU{D}^{1} \circ \QU{D}^{1} \circ \QU{D}^{1}
    \circ 
    \QU{I}^{t-1} 
    (\QU{X})\\
    &=
    \R{\QU{I}^{1}} \circ \QU{D}^{s}
    \circ 
    \QU{I}^{t-1} 
    (\QU{X}).
  \end{align*}
  Similarly,
  \begin{align*}
    \QU{D}^{s} \circ \QU{I}^{t}(\QU{X})
    &\subset
    \QU{I}^{1} \circ \QU{D}^{s} \circ \QU{I}^{t-1} 
    (\QU{X})\\
    &\subset
    \QU{I}^{2} \circ \QU{D}^{s} \circ \QU{I}^{t-2}
    (\QU{X})\\
    &~~~~~\vdots\\
    &\subset
    \QU{I}^{t} \circ \QU{D}^{s}(\QU{X})
  \end{align*}
  holds.
\end{proof}

The following theorem
proves that any $(s,t)$-error is included in $\QU{I}^t \circ \QU{D}^s$.
In other words,
if a quantum code corrects $\QU{I}^t \circ \QU{D}^s$,
then it also corrects any $(s,t)$-errors.

\begin{theo}
  \label{theo:insdel_sphere}
  For $\QU{X} \subset \state{l}{n}$,
  let $\QU{A}$ be
  the set of quantum states after $(s,t)$-errors.
  Then,
  $\QU{A} \subset \QU{I}^t \circ \QU{D}^s(\QU{X})$ holds.
\end{theo}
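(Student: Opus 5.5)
Any $(s,t)$-error is a composition of $s$ single deletions $\QU{D}^1$ and $t$ single insertions $\QU{I}^1$, applied in some order. We may write it as a word $\QU{E}_{s+t} \circ \cdots \circ \QU{E}_1$ with each $\QU{E}_i \in \{\QU{D}^1, \QU{I}^1\}$, exactly $s$ of them equal to $\QU{D}^1$. The plan is to show that, for every such word $w$, $w(\QU{X}) \subset \QU{I}^t \circ \QU{D}^s(\QU{X})$. Then $\QU{A}$, the union of $w(\QU{X})$ over all words, lies in the right-hand side.

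Two preliminary facts make this work. First, we need $\QU{D}^1\circ\QU{D}^1 = \QU{D}^2$ and $\QU{I}^1\circ\QU{I}^1=\QU{I}^2$, and more generally $\QU{D}^a\circ\QU{D}^b=\QU{D}^{a+b}$ and $\QU{I}^a\circ\QU{I}^b=\QU{I}^{a+b}$, as maps on sets. For deletions this follows from the composition of partial traces, together with the index shift $\QU{D}_p\circ\QU{D}_q=\QU{D}_{q-1}\circ\QU{D}_p$ already used in the proof of Lemma \ref{lem:1ins_1del_kai}. For insertions it follows from the definition of $\QU{I}_Q$ as a preimage under $\QU{D}_Q$ combined with the deletion identity. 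Only the inclusion $\subset$ is actually needed, and this inclusion is also implicitly used in the proof of Lemma \ref{lem:swap_TinsTdel_kai}. Second, the error maps are monotone: if $\QU{Y}\subset\QU{Z}$, then $\QU{E}(\QU{Y})\subset\QU{E}(\QU{Z})$, because $\QU{E}$ of a set is defined as a union over its elements.

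The main step is a bubble-sort argument, by induction on the number of inversions of $w$. An inversion is a pair consisting of a $\QU{D}^1$ applied after an $\QU{I}^1$. If $w$ has no inversions, all deletions act first, so $w = \QU{I}^1\circ\cdots\circ\QU{I}^1\circ\QU{D}^1\circ\cdots\circ\QU{D}^1 \subset \QU{I}^t\circ\QU{D}^s$. Otherwise some adjacent factor $\QU{D}^1\circ\QU{I}^1$ appears, so $w = u\circ\QU{D}^1\circ\QU{I}^1\circ v$. Set $\QU{Y}=v(\QU{X})$, which is a set of states all on a common number of qudits. Lemma \ref{lem:1ins_1del_kai} gives $\QU{D}^1\circ\QU{I}^1(\QU{Y})\subset\QU{I}^1\circ\QU{D}^1(\QU{Y})$, and applying $u$ with monotonicity gives $w(\QU{X})\subset w'(\QU{X})$. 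Here $w'$ has one fewer inversion, so induction finishes the argument. Alternatively, one could group $w$ into blocks $\QU{I}^{t_k}\circ\QU{D}^{s_k}\circ\cdots\circ\QU{I}^{t_1}\circ\QU{D}^{s_1}$ and apply Lemma \ref{lem:swap_TinsTdel_kai} repeatedly to pass deletion blocks rightward.

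The main obstacle is a well-definedness point. Lemma \ref{lem:1ins_1del_kai} is stated for $\QU{X}\subset\state{l}{n}$ with a fixed $n$, and deletions require enough qudits. Intermediate sets $v(\QU{X})$ do have a fixed length $n-(\#\text{deletions in }v)+(\#\text{insertions in }v)$, so the lemma applies. The one thing to check is that swapping never produces a deletion on an empty system. A swap replaces $\QU{D}^1\circ\QU{I}^1$, which runs insert then delete, by $\QU{I}^1\circ\QU{D}^1$, which runs delete then insert. This needs at least one qudit before the deletion. That holds as long as the length at that point is at least $1$. When the length is $0$, the set $\QU{D}^1\circ\QU{I}^1(\QU{Y})$ is $\{(1)\}$, and we must either interpret $\QU{D}^1$ of a $0$-qudit state as empty or assume $s\le n$. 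In the statement, $s\le n$ is implicit since $\QU{I}^t\circ\QU{D}^s(\QU{X})$ must make sense, and then the final sorted word is valid. I would handle this by assuming $s\le n$ and noting that each swap only moves deletions earlier. The number of deletions before any point therefore never exceeds $s\le n$, while insertions removed from before that point only reduce the length by a number no larger than the number of those insertions, so the length stays at least $n-s\ge0$ when a deletion is applied at length $\ge1$.
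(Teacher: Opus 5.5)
Your argument is correct and is essentially the paper's. The paper groups the word into alternating blocks and shows $\QU{A}_{k+1}\subset\QU{A}_k$ using Lemma \ref{lem:swap_TinsTdel_kai}, which is itself proved by exactly your single-swap bubble sort from Lemma \ref{lem:1ins_1del_kai}, together with the merges $\QU{I}^{a}\circ\QU{I}^{b}=\QU{I}^{a+b}$ and $\QU{D}^{a}\circ\QU{D}^{b}=\QU{D}^{a+b}$; that is your stated alternative route. Your remark that $s\le n$ must be assumed, so that every swap happens on at least one qudit where Lemma \ref{lem:1ins_1del_kai} holds, is a correct refinement the paper leaves implicit. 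Reading $\QU{D}^1$ on zero qudits as empty would not rescue the statement, so $s\le n$ is the right choice of your two options.
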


\begin{proof}
  Any $(s,t)$-error is represented by
  \begin{align*}
    \QU{I}^{t_{1}} \circ \QU{D}^{s_{1}} \circ
    \QU{I}^{t_{2}} \circ \QU{D}^{s_{s}} \circ
    \cdots \circ
    \QU{I}^{t_{k}} \circ \QU{D}^{s_{k}}
  \end{align*}
  for some $k \in \Z^+$,
  $(s_i) \in \QU{S}_k$,
  $(t_i) \in \QU{T}_k$,
  where
  \begin{align*}
    \QU{S}_k
    &:=
    \left\{
    (s_i)
    \in
    (\Z^+)^{k-1}
    \times
    \N
    \mid
    s = \sum_{i=1}^k s_{i}
    \right\},\\
    \QU{T}_k
    &:=
    \left\{
    (t_i)
    \in
    \N
    \times
    (\Z^+)^{k-1}
    \mid
    t = \sum_{i=1}^k t_{i}
    \right\}.
  \end{align*} 
  Define $\QU{A} := \bigcup_{k \in \Z^+} \QU{A}_k$,
  where
  \begin{align*}
    \QU{A}_k
    &:=
    \hspace*{-2mm}
    \bigcup_{\bm{s} \in \QU{S}_k, \bm{t} \in \QU{T}_k}
    \hspace*{-2mm}
    \QU{I}^{t_{1}} \circ \QU{D}^{s_{1}} \circ
    \QU{I}^{t_{2}} \circ \QU{D}^{s_{2}} \circ
    \cdots \circ
    \QU{I}^{t_{k}} \circ \QU{D}^{s_{k}}
    (\QU{X}).
  \end{align*}    
  From Lemma \ref{lem:swap_TinsTdel_kai},
  we get
  \begin{align*}
    &\QU{A}_{k+1}\\
    &=
    \hspace*{-2mm}
    \bigcup_{\bm{s} \in \QU{S}_{k+1}, \bm{t} \in \QU{T}_{k+1}}
    \hspace*{-2mm}
    \QU{I}^{t_{1}} \circ \QU{D}^{s_{1}} \circ
    \cdots \circ
    \QU{I}^{t_{k}} \circ \B{\QU{D}^{s_{k}}} \circ
    \R{\QU{I}^{t_{k+1}}} \circ \QU{D}^{s_{k+1}}
    (\QU{X})\\
    &\subset
    \hspace*{-2mm}
    \bigcup_{\bm{s} \in \QU{S}_{k+1}, \bm{t} \in \QU{T}_{k+1}}
    \hspace*{-2mm}
    \QU{I}^{t_{1}} \circ \QU{D}^{s_{1}} \circ
    \cdots \circ
    \QU{I}^{t_{k}} \circ \R{\QU{I}^{t_{k+1}}} \circ
    \B{\QU{D}^{s_{k}}} \circ \QU{D}^{s_{k+1}}
    (\QU{X})\\
    &=
    \hspace*{-2mm}
    \bigcup_{\bm{s} \in \QU{S}_{k+1}, \bm{t} \in \QU{T}_{k+1}}
    \hspace*{-2mm}
    \QU{I}^{t_{1}} \circ \QU{D}^{s_{1}} \circ
    \cdots \circ
    \QU{I}^{t_{k}+t_{k+1}} \circ
    \QU{D}^{s_{k}+s_{k+1}}
    (\QU{X})\\
    &=
    \QU{A}_{k}
  \end{align*}
  Hence,
  we have
  \begin{align*}
    \cdots
    \subset \QU{A}_{k+1}
    \subset \QU{A}_{k}
    \subset \QU{A}_{k-1}
    \subset 
    \cdots    
    \subset \QU{A}_{1}
    =
    \QU{I}^t \circ \QU{D}^s(\QU{X}).
  \end{align*}
  Therefore,
  $\QU{A} = \bigcup_{k \in \Z^+} \QU{A}_k \subset \QU{I}^t \circ \QU{D}^s(\QU{X})$
  holds.
\end{proof}

\section{Deletion Correcting Codes Correct Insertion Errors}
\label{sub:relat_indel_correct}
This paper employs the following definition of the error correcting capability
similar to classical coding theory.

\begin{defi}
  \label{def:error_corrct}
  For $\QU{X} \subset \state{l}{n}$ and a quantum error $\QU{E}$,
  we say that $\QU{X}$ corrects $\QU{E}$
  if $\QU{X}$ satisfies
  \begin{align*}
    \forall
    \rho_1, \rho_2
    \in
    \QU{X}
    ~
    \rho_1 \neq \rho_2
    \Rightarrow
    \QU{E}(\rho_1)
    \cap
    \QU{E}(\rho_2)
    =
    \emptyset.
  \end{align*}
\end{defi}

\subsection{Properties of Deletion and Insertion Errors}
\label{subsec:prop_indel}
This section derives the properties of deletion and insertion errors.
From the definition of $\QU{I}_Q(\rho_2)$,
we get
\begin{align}
  \rho_1 \in \QU{I}_Q(\rho_2)
  &\iff
  \QU{D}_Q(\rho_1) = \rho_2.
  \label{eq:Qins_def_iff}
\end{align}
The following lemma is
an extension of Eq.~\eqref{eq:Qins_def_iff}.

\begin{lem}
  \label{lem:sins_iff_sdel}
  Any $\rho_1 \in \state{l}{(m+s)}, \rho_2 \in \state{l}{m}$ satisfy
  \begin{align*}
    \rho_1 \in \QU{I}^s(\rho_2)
    \iff
    \QU{D}^s(\rho_1) \ni \rho_2.
  \end{align*}
\end{lem}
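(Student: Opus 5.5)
This follows directly from the definitions by unfolding the unions over index sets and applying Eq.~\eqref{eq:Qins_def_iff} to each fixed index set. By definition,
\begin{align*}
  \QU{I}^s(\rho_2) = \bigcup_{Q \in \subint{m+s}{s}} \QU{I}_Q(\rho_2),
\end{align*}
and since $\rho_1 \in \state{l}{(m+s)}$, we also have
\begin{align*}
  \QU{D}^s(\rho_1) = \{ \QU{D}_P(\rho_1) \mid P \in \subint{m+s}{s} \}.
\end{align*}
Both sides of the claimed equivalence are therefore existential statements over the same index family $\subint{m+s}{s}$.

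For the forward direction, assume $\rho_1 \in \QU{I}^s(\rho_2)$. Then there is some $Q \in \subint{m+s}{s}$ with $\rho_1 \in \QU{I}_Q(\rho_2)$. Eq.~\eqref{eq:Qins_def_iff} turns this into $\QU{D}_Q(\rho_1) = \rho_2$, so $\rho_2 \in \QU{D}^s(\rho_1)$.

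For the converse, assume $\rho_2 \in \QU{D}^s(\rho_1)$. Choose $P \in \subint{m+s}{s}$ with $\QU{D}_P(\rho_1) = \rho_2$. Since $\rho_1 \in \state{l}{(m+s)}$, the definition of $\QU{I}_P(\rho_2)$ places $\rho_1$ in that set. Hence $\rho_1 \in \QU{I}_P(\rho_2) \subset \QU{I}^s(\rho_2)$.

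The only point needing care is that the index ranges really match. The insertion union runs over subsets of $\setint{m+s}$ because $\rho_2$ has $m$ qudits and $s$ are inserted. The deletion set is indexed by subsets of $\setint{m+s}$ because $\rho_1$ has $m+s$ qudits. These coincide, and the dimension check $\QU{D}_P(\rho_1) \in \state{l}{m}$ is automatic. No structural results such as Theorem~\ref{theo:ins_state} are needed.
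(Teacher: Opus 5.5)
Your proof is correct and follows essentially the same route as the paper. The paper writes $\rho_1 \in \QU{I}^s(\rho_2)$ as $\exists Q \in \subint{m+s}{s}$ with $\rho_1 \in \QU{I}_Q(\rho_2)$, converts each instance via Eq.~\eqref{eq:Qins_def_iff}, and reads off $\rho_2 \in \QU{D}^s(\rho_1)$ in a single chain of equivalences; your explicit check that both index families are $\subint{m+s}{s}$ is a step the paper leaves implicit.
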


\begin{proof}
  From Eq.~\eqref{eq:Qins_def_iff},
  \begin{align*}
    \rho_1 \in \QU{I}^s(\rho_2)
    \iff&
    \exists Q \in \subint{m+s}{s}
    ~
    \rho_1 \in \QU{I}_Q(\rho_2)\\
    \iff&
    \exists Q \in \subint{m+s}{s}
    ~
    \QU{D}_Q(\rho_1) = \rho_2\\
    \iff&
    \QU{D}^s(\rho_1) \ni \rho_2
  \end{align*}
  holds.
\end{proof}

The following lemma is an extension of Lemma \ref{lem:sins_iff_sdel}.

\begin{lem}
  \label{lem:sins_iff_sdel_kai}
  Any $\QU{X}_1, \QU{X}_2 \subset \state{l}{\N}$ satisfy
  \begin{align*}
    \QU{X}_1 \cap \QU{I}^s(\QU{X}_2)
    \neq
    \emptyset
    \iff
    \QU{D}^s(\QU{X}_1) \cap \QU{X}_2
    \neq
    \emptyset.
  \end{align*}
\end{lem}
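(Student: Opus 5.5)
The plan is to reduce the statement to Lemma~\ref{lem:sins_iff_sdel} by unfolding both nonempty intersections into the existence of a concrete pair of states.

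First, by the definition of $\QU{E}(\QU{X})$ as a union over elements, the condition $\QU{X}_1 \cap \QU{I}^s(\QU{X}_2) \neq \emptyset$ means that there exist $\rho_1 \in \QU{X}_1$ and $\rho_2 \in \QU{X}_2$ with $\rho_1 \in \QU{I}^s(\rho_2)$. Likewise, $\QU{D}^s(\QU{X}_1) \cap \QU{X}_2 \neq \emptyset$ means that there exist $\rho_1 \in \QU{X}_1$ and $\rho_2 \in \QU{X}_2$ with $\rho_2 \in \QU{D}^s(\rho_1)$.

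Second, I handle the number of qudits. Since $\QU{X}_1, \QU{X}_2 \subset \state{l}{\N}$ may contain states of different lengths, I note that $\rho_1 \in \QU{I}^s(\rho_2)$ forces $\rho_2 \in \state{l}{m}$ and $\rho_1 \in \state{l}{(m+s)}$ for some $m$. Symmetrically, $\rho_2 \in \QU{D}^s(\rho_1)$ forces $\rho_1$ to have $s$ more qudits than $\rho_2$. Here I take $\QU{D}^s(\rho_1)=\emptyset$ when $\rho_1$ has fewer than $s$ qudits, and similarly for insertions into states of the wrong size.

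With the sizes matched, Lemma~\ref{lem:sins_iff_sdel} gives $\rho_1 \in \QU{I}^s(\rho_2) \iff \rho_2 \in \QU{D}^s(\rho_1)$ for each such pair. Hence the two existential statements coincide, and the chain of equivalences closes.

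The only delicate point is this length bookkeeping: the existence of a witness pair on either side already pins down compatible dimensions, so Lemma~\ref{lem:sins_iff_sdel} applies. Apart from that, the proof is a direct rewriting of set membership.
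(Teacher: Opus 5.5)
Your proposal is correct and follows the paper's proof essentially verbatim. The paper likewise rewrites both nonempty intersections as the existence of a pair $\rho_1 \in \QU{X}_1$, $\rho_2 \in \QU{X}_2$, and then applies Lemma~\ref{lem:sins_iff_sdel} to that pair in a single chain of equivalences. Your explicit check that a witness pair on either side forces $\rho_1$ to have exactly $s$ more qudits than $\rho_2$ is a detail the paper leaves implicit, and you handle it correctly.
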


\begin{proof}
  From Lemma \ref{lem:sins_iff_sdel},
  \begin{align*}
    \QU{X}_1 \cap \QU{I}^s(\QU{X}_2)
    \neq
    \emptyset
    \iff&
    \exists
    \rho_1 \in \QU{X}_1
    ~
    \rho_1 \in \QU{I}^s(\QU{X}_2)\\
    \iff&
    \exists
    \rho_1 \in \QU{X}_1
    ~
    \exists
    \rho_2 \in \QU{X}_2
    ~
    \rho_1 \in \QU{I}^s(\rho_2)\\
    \iff&
    \exists
    \rho_1 \in \QU{X}_1
    ~
    \exists
    \rho_2 \in \QU{X}_2
    ~
    \QU{D}^s(\rho_1) \ni \rho_2\\
    \iff&
    \exists
    \rho_2 \in \QU{X}_2
    ~
    \QU{D}^s(\QU{X}_1) \ni \rho_2\\
    \iff&
    \QU{D}^s(\QU{X}_1) \cap \QU{X}_2
    \neq
    \emptyset
  \end{align*}
  holds.
\end{proof}

The following is derived
from Lemma \ref{lem:sins_iff_sdel_kai}.

\begin{lem}
  \label{lem:sub_set}
  Any $\QU{X}_1,\QU{X}_2 \subset \state{l}{\N}$ satisfy
  \begin{align*}
    \QU{X}_1
    \cap
    \QU{I}^s \circ \QU{D}^t(\QU{X}_2)
    \neq
    \emptyset
    &\iff
    \QU{D}^s(\QU{X}_1) \cap \QU{D}^t(\QU{X}_2)
    \neq
    \emptyset,\\
    \QU{X}_1
    \cap
    \QU{D}^s \circ \QU{I}^t(\QU{X}_2)
    \neq
    \emptyset
    &\iff
    \QU{I}^s(\QU{X}_1) \cap \QU{I}^t(\QU{X}_2)
    \neq
    \emptyset.
  \end{align*}
  In particular,
  if $\QU{X}_1 = \{ \rho \}$,
  \begin{align*}
    \rho
    \in
    \QU{I}^s \circ \QU{D}^t(\QU{X}_2)
    &\iff
    \QU{D}^s(\rho) \cap \QU{D}^t(\QU{X}_2)
    \neq
    \emptyset,\\
    \rho
    \in
    \QU{D}^s \circ \QU{I}^t(\QU{X}_2)
    &\iff
    \QU{I}^s(\rho) \cap \QU{I}^t(\QU{X}_2)
    \neq
    \emptyset
  \end{align*}
  holds.
\end{lem}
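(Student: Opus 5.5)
The plan is to obtain both equivalences by applying Lemma \ref{lem:sins_iff_sdel_kai} to a suitably chosen pair of sets. The particular case $\QU{X}_1 = \{\rho\}$ is then immediate, since $\{\rho\} \cap \QU{Y} \neq \emptyset$ is the same as $\rho \in \QU{Y}$.

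For the first equivalence, I would set $\QU{Y} := \QU{D}^t(\QU{X}_2) \subset \state{l}{\N}$. By the definition of composite errors on sets, $\QU{I}^s \circ \QU{D}^t(\QU{X}_2) = \QU{I}^s(\QU{Y})$. Applying Lemma \ref{lem:sins_iff_sdel_kai} to the pair $(\QU{X}_1, \QU{Y})$ gives
\begin{align*}
  \QU{X}_1 \cap \QU{I}^s(\QU{Y}) \neq \emptyset
  \iff
  \QU{D}^s(\QU{X}_1) \cap \QU{Y} \neq \emptyset,
\end{align*}
which is exactly the claim.

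For the second equivalence, I would swap the roles of the two sets. Put $\QU{Z} := \QU{I}^t(\QU{X}_2)$, so that $\QU{D}^s \circ \QU{I}^t(\QU{X}_2) = \QU{D}^s(\QU{Z})$. Applying Lemma \ref{lem:sins_iff_sdel_kai} with first argument $\QU{Z}$ and second argument $\QU{X}_1$ gives
\begin{align*}
  \QU{Z} \cap \QU{I}^s(\QU{X}_1) \neq \emptyset
  \iff
  \QU{D}^s(\QU{Z}) \cap \QU{X}_1 \neq \emptyset.
\end{align*}
Since intersection is symmetric, this reads $\QU{X}_1 \cap \QU{D}^s \circ \QU{I}^t(\QU{X}_2) \neq \emptyset \iff \QU{I}^s(\QU{X}_1) \cap \QU{I}^t(\QU{X}_2) \neq \emptyset$, as claimed.

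The only point needing care is bookkeeping about lengths. Lemma \ref{lem:sins_iff_sdel_kai} is stated for arbitrary subsets of $\state{l}{\N}$, and $\QU{D}^s$ and $\QU{I}^s$ are only meaningful, or only nonempty, on states of suitable lengths. I would check that states where an operation is undefined simply contribute nothing to either side. This holds because the equivalence in Lemma \ref{lem:sins_iff_sdel} is pointwise, with $\rho_1 \in \state{l}{(m+s)}$ and $\rho_2 \in \state{l}{m}$, and length mismatches make both sides false. I expect no genuine obstacle beyond this.
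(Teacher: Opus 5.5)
Your proposal is correct and is exactly the derivation the paper intends: the paper only remarks that this lemma follows from Lemma~\ref{lem:sins_iff_sdel_kai}, and your two substitutions spell that out correctly ($\QU{Y} = \QU{D}^t(\QU{X}_2)$ for the first equivalence, and swapped roles with $\QU{Z} = \QU{I}^t(\QU{X}_2)$ for the second). Your remark on length mismatches is a sensible extra check that the paper leaves implicit.
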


\subsection{Main Result}
\begin{theo}
  \label{theo:1del_iff_1ins_2}
  If $\QU{X} \subset \state{l}{n}$ corrects $t$-deletion errors,
  then $\QU{X}$ also corrects a total of $t$ deletion and insertion errors.
\end{theo}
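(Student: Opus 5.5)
We argue by contradiction: suppose some error with a total of $t$ deletions and insertions fails to separate two codewords. Let $\rho_1 \neq \rho_2 \in \QU{X}$, and suppose there are $(s,r)$ with $s+r=t$ and $(s,r)$-errors $\QU{E}_1,\QU{E}_2$ such that $\QU{E}_1(\rho_1)\cap\QU{E}_2(\rho_2)\neq\emptyset$. Because the resulting states must have the same number of qudits, both errors map $n$ qudits to $n-s+r$ qudits. The $(s,r)$-error hitting $\rho_i$ may interleave deletions and insertions arbitrarily, so we first normalize it. By Theorem \ref{theo:insdel_sphere}, the set of states obtained from $\rho_i$ by any $(s,r)$-error lies in $\QU{I}^r\circ\QU{D}^s(\rho_i)$. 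Hence it suffices to show
\begin{align*}
  \QU{I}^r\circ\QU{D}^s(\rho_1)\cap\QU{I}^r\circ\QU{D}^s(\rho_2)=\emptyset
\end{align*}
for all $s+r=t$.

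Next we peel off the insertions. Suppose $\sigma$ lies in both sets. Apply Lemma \ref{lem:sub_set} with $\QU{X}_1=\{\sigma\}$ and $\QU{X}_2=\{\rho_i\}$, using its first equivalence with insertion count $r$ and deletion count $s$. This gives $\QU{D}^r(\sigma)\cap\QU{D}^s(\rho_i)\neq\emptyset$ for $i=1,2$. That is not yet a single common state, so it is cleaner to argue directly. Write $\sigma\in\QU{I}_{Q_i}(\tau_i)$ with $\tau_i\in\QU{D}^s(\rho_i)$. Then $\QU{D}_{Q_i}(\sigma)=\tau_i$, so $\tau_i\in\QU{D}^r(\sigma)$ for each $i$. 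Both $\tau_1$ and $\tau_2$ are deletions of the same $\sigma$, but at possibly different positions $Q_1\neq Q_2$, so we cannot conclude $\tau_1=\tau_2$.

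This is where the remaining deletion budget is used, and it is the step I expect to be the main obstacle. Applying $r$ further deletions to each side gives
\begin{align*}
  \QU{D}^r(\tau_i)\subset\QU{D}^{s+r}(\rho_i)=\QU{D}^t(\rho_i),
\end{align*}
using that composed deletions are deletions ($\QU{D}^a\circ\QU{D}^b=\QU{D}^{a+b}$, from the definition of $\QU{D}_P$ as a composition of partial traces). So it suffices to exhibit a common element of $\QU{D}^r(\tau_1)$ and $\QU{D}^r(\tau_2)$. Take $\omega:=\QU{D}_{Q_1\cup Q_2}(\sigma)$, which traces out $|Q_1\cup Q_2|\le 2r$ positions, and then trace out further positions if needed until exactly $2r$ qudits of $\sigma$ are removed. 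Since partial traces at distinct positions commute, up to index shifts as in the proof of Lemma \ref{lem:1ins_1del_kai}, $\omega$ is obtained from $\tau_1=\QU{D}_{Q_1}(\sigma)$ by $r$ deletions, and likewise from $\tau_2$. Hence $\omega\in\QU{D}^r(\tau_1)\cap\QU{D}^r(\tau_2)\subset\QU{D}^t(\rho_1)\cap\QU{D}^t(\rho_2)$. This contradicts the assumption that $\QU{X}$ corrects $t$-deletion errors. The only thing to check is that $n-s+r-2r=n-t\ge 0$, which holds whenever $\QU{D}^t$ is defined.

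Finally, we tidy the bookkeeping for the case where the two errors are $(s,r)$ and $(s',r')$ with $s+r\le t$ and $s'+r'\le t$ rather than the same split. Equal output length forces $r-s=r'-s'$. Using the same argument with $\omega=\QU{D}_{Q_1\cup Q_2}(\sigma)$ padded to exactly $r+r'$ deletions, $\rho_1$ reaches $\omega$ by $s+r+r'$ deletions. By the length identity this equals $s'+2r'$, which is the number of deletions $\rho_2$ uses to reach $\omega$. If $s+r=t$, this count is $\le t+r'$; to land in $\QU{D}^t$ exactly I would choose $\omega$ with exactly $t'':=\max(r,r')$-type padding. Concretely, delete from $\sigma$ the set $Q_1\cup Q_2$ padded to size $\max(r,r')+\min(r,r')$, and if necessary apply extra common deletions to both $\rho_i$ to reach $\QU{D}^{t}$ or any larger count. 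Disjointness of $\QU{D}^{t}$-spheres implies disjointness of $\QU{D}^{t'}$-spheres for $t'\ge t$? No, it is the reverse: larger spheres are less likely to be disjoint. So I would instead restrict the argument to the total $s+r=t$ and $s'+r'=t$, which with equal lengths forces $s=s'$ and $r=r'$. In that case the count is exactly $s+2r-r=t$ deletions from each $\rho_i$ after discounting, and the argument above applies verbatim.
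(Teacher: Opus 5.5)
Your proof is correct. After the shared first step, collapsing every interleaving into $\QU{I}^r\circ\QU{D}^s$ via Theorem~\ref{theo:insdel_sphere}, it takes a different route from the paper.

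The paper never exhibits a witness. Lemma~\ref{lem:1del_iff_1ins_2} rewrites disjointness of the $\QU{I}^{b}\circ\QU{D}^{a}$ spheres as $\rho_1\notin\QU{I}^{a}\circ\QU{D}^{b}\circ\QU{I}^{b}\circ\QU{D}^{a}(\rho_2)$, using Lemmas~\ref{lem:sins_iff_sdel_kai} and~\ref{lem:sub_set}. It then applies the swap inclusion of Lemma~\ref{lem:swap_TinsTdel_kai} to trade one deletion for one insertion. The theorem follows by iterating from $\QU{I}^0\circ\QU{D}^t$.

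You instead go in one step from a common point $\sigma$ to an explicit common point $\omega$ of the two $t$-deletion spheres. You use only that $\tau_i=\QU{D}_{Q_i}(\sigma)$ is determined by $\sigma$ and that partial traces over disjoint position sets compose. This is the classical Levenshtein argument carried over to the quantum setting. It is more elementary and constructive at that step, and it avoids the chain of set-level equivalences.

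The paper's route gives something yours does not: the monotonicity statement that disjointness at $(a,b)$ implies disjointness at $(a-1,b+1)$, even when the code is not $(a+b)$-deletion-correcting. It also stays within the inclusion calculus that the paper reuses for the metric in Section~\ref{sec:Qindel_dis}.

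Your final paragraph miscounts and then retreats when it does not need to. Suppose $\sigma$ arises from $\rho_1$ by an $(s,r)$-pattern and from $\rho_2$ by an $(s',r')$-pattern with $r-s=r'-s'$. Trace out $Q_1\cup Q_2$, padded to $r+r'$ positions. This reaches $\omega$ from $\rho_1$ with $s+r'$ deletions (not $s+r+r'$), and from $\rho_2$ with $s'+r$ deletions. These two counts are equal, and both equal $\tfrac12\bigl((s+r)+(s'+r')\bigr)\le t$.

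Now delete the same further $t-(s+r')$ positions from $\omega$. This gives a common element of $\QU{D}^t(\rho_1)$ and $\QU{D}^t(\rho_2)$. That is exactly the direction you correctly identified: disjointness at $t$ implies disjointness at every smaller count.

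So your method also handles mixed splits with equal output length, which the paper does not address. Restricting to $s=s'$, $r=r'$ is unnecessary, although it already suffices for the statement as the paper proves it.
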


Theorem \ref{theo:1del_iff_1ins_2}
is derived
from the following Lemma.

\begin{lem}
  \label{lem:1del_iff_1ins_2}
  For quantum states $\rho_1, \rho_2 \in \state{l}{n}$,
  if $\QU{I}^{t} \circ \QU{D}^{s}(\rho_1) \cap \QU{I}^{t} \circ \QU{D}^{s}(\rho_2) =\emptyset$,
  then
  \begin{align*}
    \QU{I}^{(t+1)} \circ \QU{D}^{(s-1)}(\rho_1)
    \cap
    \QU{I}^{(t+1)} \circ \QU{D}^{(s-1)}(\rho_2)
    =
    \emptyset.
  \end{align*}
\end{lem}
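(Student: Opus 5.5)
We argue by contraposition: assume the two sets $\QU{I}^{(t+1)} \circ \QU{D}^{(s-1)}(\rho_1)$ and $\QU{I}^{(t+1)} \circ \QU{D}^{(s-1)}(\rho_2)$ share a state, and show that $\QU{I}^{t} \circ \QU{D}^{s}(\rho_1)$ and $\QU{I}^{t} \circ \QU{D}^{s}(\rho_2)$ also share one.

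First translate the hypothesis using Lemma \ref{lem:sub_set}. If $\sigma \in \QU{I}^{(t+1)} \circ \QU{D}^{(s-1)}(\rho_1) \cap \QU{I}^{(t+1)} \circ \QU{D}^{(s-1)}(\rho_2)$, then by definition there are $\tau_i \in \QU{D}^{(s-1)}(\rho_i)$ with $\sigma \in \QU{I}^{(t+1)}(\tau_i)$, for $i=1,2$. By Lemma \ref{lem:sins_iff_sdel}, this means $\tau_i \in \QU{D}^{(t+1)}(\sigma)$.

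Peel off one deletion from each side. Since $\tau_i \in \QU{D}^{(t+1)}(\sigma)$, there are index sets $P_i$ of size $t+1$ with $\QU{D}_{P_i}(\sigma)=\tau_i$. Write $\QU{D}_{P_i} = \QU{D}_{p_i} \circ \QU{D}_{P_i'}$, where $|P_i'| = t$ and $p_i$ is an appropriately reindexed position; this uses the usual commutation rule $\QU{D}_p \circ \QU{D}_q = \QU{D}_{q-1} \circ \QU{D}_p$. Set $\mu_i := \QU{D}_{P_i'}(\sigma)$. Then $\sigma \in \QU{I}^{t}(\mu_i)$, and $\tau_i = \QU{D}_{p_i}(\mu_i)$ is one deletion of $\mu_i$. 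Note that $\mu_1$ and $\mu_2$ are different states in general, so we cannot yet conclude.

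Now the key step is to produce a single state of length $n-s+t$ in both target sets. The natural candidate comes from $\tau_1$ and $\tau_2$ directly. Since $\tau_i \in \QU{D}^{(s-1)}(\rho_i)$, we have $\QU{D}^1(\tau_i) \subset \QU{D}^{s}(\rho_i)$. Lemma \ref{lem:sub_set} reduces the problem to finding $\omega$ and $\kappa_i \in \QU{D}^1(\tau_i)$ with $\omega \in \QU{I}^t(\kappa_1)\cap\QU{I}^t(\kappa_2)$. Equivalently, we need $\QU{D}^t(\omega)\ni\kappa_1,\kappa_2$.

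To build $\omega$, take $\omega := \QU{D}_{r}(\sigma)$ for a single suitable position $r$. Deleting one more position then gives $\QU{D}^{t}(\omega) \supset \QU{D}^{t+1}(\sigma)$-type states. Concretely, choose $r \in P_1$, for instance the smallest element, and reindex. Then $\QU{D}^{t}(\omega)$ contains $\tau_1$ exactly. For the second side, deleting $r$ together with any $t$ elements of $P_2$ gives a $(t+1)$-deletion of $\tau_2$'s ancestor; this is precisely an element of $\QU{D}^1(\tau_2)$ when $r\notin P_2$, and equals $\tau_2$ when $r\in P_2$. So $\QU{D}^t(\omega)$ contains a one-deletion of $\tau_2$, using $\QU{D}_A\circ\QU{D}_B$ composition of partial traces.

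The remaining asymmetry is that $\tau_1$ itself is not a one-deletion of $\tau_1$. This is fixed by instead deleting $r$ and then $t$ further elements drawn from $P_1 \cup P_2$ minus one element. More cleanly, let $\omega := \QU{D}_{\{r\}}(\sigma)$ with $r \notin P_1\cup P_2$ when $n-s+t+2 > |P_1\cup P_2|$; then both $\QU{D}_{P_i}(\omega)$ are one-deletions of $\tau_i$. In the degenerate case where $P_1\cup P_2$ covers everything, choose $r\in P_1\cap P_2$ if nonempty, or handle small lengths (where $\QU{I}^t\circ\QU{D}^s$ is all of $\state{l}{n-s+t}$) directly as in the proof of Lemma \ref{lem:1ins_1del_kai}.

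I expect this index bookkeeping to be the main obstacle: choosing $r$ and the reindexed deletion sets so that both sides land in $\QU{D}^1(\tau_i)$. If it becomes messy, the fallback is to use Lemma \ref{lem:swap_TinsTdel_kai}. That lemma gives $\QU{D}^1\circ\QU{I}^{t+1}\subset\QU{I}^{t+1}\circ\QU{D}^1$, but a converse-type inclusion would be needed, so the direct index argument is the primary route.
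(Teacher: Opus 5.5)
\textbf{The key step fails: your $\omega$ lives on one qudit too many.} Your reduction is sound: it suffices to exhibit $\omega$ and $\kappa_i\in\QU{D}^1(\tau_i)\subset\QU{D}^s(\rho_i)$ with $\kappa_1,\kappa_2\in\QU{D}^t(\omega)$. The problem is the size of $\omega$.
\begin{itemize}
\item The state $\sigma$ lives on $n-s+t+2$ qudits, while every element of $\QU{I}^t\circ\QU{D}^s(\rho_i)$ lives on $n-s+t$ qudits. So $\omega$ must come from $\sigma$ by deleting \emph{two} positions, not one.
\item With $\omega=\QU{D}_{\{r\}}(\sigma)$, every element of $\QU{D}^t(\omega)$ has $n-s+1$ qudits. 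It can equal $\tau_1$, but it can never be a one-deletion of $\tau_2$, which has $n-s$ qudits. So your claim that $\QU{D}^t(\omega)$ contains a one-deletion of $\tau_2$ is false.
\item Your cleaner variant has the same defect. Since $|P_i|=t+1$, $\QU{D}_{P_i}(\omega)$ is a $(t+1)$-deletion of $\omega$. This only gives $\omega\in\QU{I}^{t+1}\circ\QU{D}^{s}(\rho_i)$, which is the wrong sphere.
\item The rule $r\notin P_1\cup P_2$ also points the wrong way. If you delete a two-element set $R$, you need $|R\setminus P_i|\le 1$ for $i=1,2$. A position outside $P_1\cup P_2$ then forces the other one into $P_1\cap P_2$, which may be empty.
\end{itemize}

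\textbf{The repair is short.}
\begin{itemize}
\item Choose $R=\{a,b\}$ meeting both $P_1$ and $P_2$: take $a\in P_1\cap P_2$ and any $b\neq a$ if the intersection is nonempty, otherwise $a\in P_1$ and $b\in P_2$. Put $\omega:=\QU{D}_R(\sigma)$, which lives on $n-s+t$ qudits.
\item Let $c_i$ be the element of $R\setminus P_i$ if there is one, and otherwise any index outside $P_i$. Such an index exists because $\sigma$ has $n-s+t+2>t+1$ qudits.
\item Then $\kappa_i:=\QU{D}_{P_i\cup\{c_i\}}(\sigma)$ is a one-deletion of $\tau_i$. Because $R\subset P_i\cup\{c_i\}$ and $|P_i\cup\{c_i\}|=t+2$, it is also a $t$-deletion of $\omega$. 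Hence $\omega$ lies in both sets $\QU{I}^t\circ\QU{D}^s(\rho_i)$.
\end{itemize}
This uses only composition of partial traces over disjoint sets of qudits. No degenerate cases arise, and it gives a genuinely more elementary route than the paper's, bypassing Lemmas \ref{lem:1ins_1del_kai} and \ref{lem:swap_TinsTdel_kai}.

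\textbf{Your fallback worry is unfounded.} The paper needs only the inclusion in Lemma \ref{lem:swap_TinsTdel_kai}. Via Lemmas \ref{lem:sins_iff_sdel_kai} and \ref{lem:sub_set}, it rewrites the hypothesis as $\rho_1\notin\QU{I}^{s}\circ\QU{D}^{t}\circ\QU{I}^{t}\circ\QU{D}^{s}(\rho_2)$, and the conclusion analogously with $(s-1,t+1)$. Applying Lemma \ref{lem:swap_TinsTdel_kai} twice then gives $\QU{I}^{s-1}\circ\QU{D}^{t+1}\circ\QU{I}^{t+1}\circ\QU{D}^{s-1}(\rho_2)\subset\QU{I}^{s}\circ\QU{D}^{t}\circ\QU{I}^{t}\circ\QU{D}^{s}(\rho_2)$. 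No converse inclusion is required, since non-membership in the larger set implies non-membership in the smaller one.
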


\begin{proof}
  From Lemmas \ref{lem:sins_iff_sdel_kai} and \ref{lem:sub_set},
  \begin{align}
    &\QU{I}^{t} \circ \QU{D}^{s}(\rho_1)  \cap \QU{I}^{t} \circ \QU{D}^{s}(\rho_2)
    =
    \emptyset
    \nonumber\\
    \iff&
    \QU{D}^{s}(\rho_1)  \cap \QU{D}^{t} \circ \QU{I}^{t} \circ \QU{D}^{s}(\rho_2)
    =
    \emptyset
    \nonumber\\
    \iff&
    \rho_1
    \notin
    \QU{I}^{s} \circ \QU{D}^{t} \circ \QU{I}^{t} \circ \QU{D}^{s}(\rho_2)
    \label{eq:assu_change1}
  \end{align}
  holds.
  From Lemma \ref{lem:swap_TinsTdel_kai},
  we get
  \begin{align*}
    &\QU{I}^{(s-1)}\circ \QU{D}^{(t+1)} \circ \QU{I}^{(t+1)} \circ \QU{D}^{(s-1)}(\rho_2)\\
    &\subset
    \QU{I}^{s} \circ \QU{D}^{(t+1)} \circ \QU{I}^{t} \circ \QU{D}^{(s-1)}(\rho_2)\\
    &\subset 
    \QU{I}^{s} \circ \QU{D}^{t} \circ \QU{I}^{t} \circ \QU{D}^{s}(\rho_2).
  \end{align*}
  Hence,
  we have
  \begin{align}
    &\rho_1 \notin \QU{I}^{s} \circ \QU{D}^{t} \circ \QU{I}^{t} \circ \QU{D}^{s}(\rho_2)
    \nonumber\\
    &\Rightarrow
    \rho_1 \notin \QU{I}^{(s-1)}\circ \QU{D}^{(t+1)} \circ \QU{I}^{(t+1)} \circ \QU{D}^{(s-1)}(\rho_2).
    \label{eq:st->s-1t+1}
  \end{align}
  By a similar way to Eq.~\eqref{eq:assu_change1},
  \begin{align}
    &\rho_1
    \notin
    \QU{I}^{(s-1)} \circ \QU{D}^{(t+1)} \circ \QU{I}^{(t+1)} \circ \QU{D}^{(s-1)}(\rho_2)
    \nonumber\\
    \iff
    &\QU{I}^{(t+1)} \circ \QU{D}^{(s-1)}(\rho_1)
    \cap
    \QU{I}^{(t+1)} \circ \QU{D}^{(s-1)}(\rho_2)
    =
    \emptyset
    \label{eq:assu_change2}
  \end{align}
  holds.
  Combining
  Eqs.~\eqref{eq:assu_change1}, \eqref{eq:st->s-1t+1}, and \eqref{eq:assu_change2},
  we get the lemma.
\end{proof}

Since deletion and insertion errors are non-commutative \cite{hagiwara2024introduction},
the converse of Theorem \ref{theo:1del_iff_1ins_2} does not necessarily hold.

\begin{exam}
  Define
  $\rho$ and $\ket{\psi}$ as in Example \ref{exam:ins_set} and \ref{exam:swap_TinsTdel},
  respectively.
  Then,
  we get
  \begin{align*}
    &\ket{\psi}\bra{\psi}
    \in
    \QU{I}^1 \circ \QU{D}^1(\rho),
    &&\ket{\psi}\bra{\psi}
    \notin
    \QU{D}^1 \circ \QU{I}^1(\rho).
  \end{align*}
  From Lemma \ref{lem:sub_set},
  we get
  \begin{align*}
    &\QU{D}^1(\ket{\psi}\bra{\psi}) \cap \QU{D}^1(\rho)
    \neq
    \emptyset,
    &&\QU{I}^1(\ket{\psi}\bra{\psi}) \cap \QU{I}^1(\rho)
    =
    \emptyset.
  \end{align*}
  In other words,
  $\{ \rho,\ket{\psi}\bra{\psi} \}$ corrects a single insertion
  but not a single deletion.
\end{exam}

\section{Quantum Indel Distance}
\label{sec:Qindel_dis}
This section defines the quantum indel distance and
describes 
insertion and deletion errors correcting capability of quantum codes
by this distance.

\subsection{Definition and Examples}
The indel distance for $\bm{x} \in \Z_l^n$ and $\bm{y} \in \Z_l^m$ is
defined as the sum of the number of classical deletions and insertions
required to transform $\bm{x}$ into $\bm{y}$ \cite{levenshtein1966binary}.
Similar to the indel distance,
we define
the quantum indel distance for $\rho_1, \rho_2 \in \state{l}{\N}$ as
the sum of the number of deletion and insertion errors
required to transform
$\rho_1$ into $\rho_2$.
The quantum indel distance is mathematically defined as follows:

\begin{defi}
  The quantum indel distance $d( \rho_1, \rho_2 )$ of $\rho_1, \rho_2 \in \state{l}{\N}$
  is defined as follows.
  \begin{align*}
    d( \rho_1, \rho_2 )
    &:=
    \min_{(s,t) \in \N^2, (s,t)\text{-error~}\QU{E}}
    \{
    s+t
    \mid
    \rho_2
    \in
    \QU{E}(\rho_1)
    \}\\
    &=
    \min_{(s,t) \in \N^2}
    \{
    s+t
    \mid
    \rho_2
    \in
    \QU{I}^t \circ \QU{D}^s(\rho_1)
    \}\\
    &=
    \min_{(s,t) \in \N^2}
    \{
    s+t
    \mid
    \QU{D}^s(\rho_1) \cap \QU{D}^t(\rho_2)
    \neq
    \emptyset
    \}
  \end{align*}
\end{defi}

The first equality
follows
from Theorem \ref{theo:insdel_sphere}
and
the second equality
follows
from Lemma \ref{lem:sub_set}.

\begin{exam}
  Suppose
  \begin{align*}
    \rho_1
    &=
    \tfrac{1}{2}
    \ket{00}\bra{00}
    +
    \tfrac{1}{2}
    \ket{11}\bra{11},\\
    \rho_2
    &=
    \tfrac{1}{2}
    \ket{10}\bra{10}
    +
    \tfrac{1}{2}
    \ket{01}\bra{01}.
  \end{align*}
  Then,
  $d( \rho_1, \rho_2 ) = 2$ holds
  since
  \begin{align*}
    \QU{D}_{\{ 1 \}}(\rho_1)
    =
    \QU{D}_{\{ 2 \}}(\rho_2)
    =
    \tfrac{1}{2}\ket{0}\bra{0}
    +
    \tfrac{1}{2}\ket{1}\bra{1}.
  \end{align*} 
\end{exam}

\begin{defi}
  The minimum distance $d_{\text{min}}(\QU{X})$ of $\QU{X} \subset \state{l}{\N}$
  is defined as follows:
  \begin{align*}
    d_{\text{min}}(\QU{X})
    :=
    \min_{ \rho_1, \rho_2 \in \QU{X} : \rho_1 \neq \rho_2 }
    d( \rho_1, \rho_2 )
  \end{align*}
\end{defi}

Now,
we will calculate the minimum distances of several quantum codes.
Let $j$ be the imaginary unit.

\begin{exam}
  Define
  $\QU{X}_1 := \{ \alpha \ket{00} + \beta \ket{11} \mid \alpha,\beta \in \C, |\alpha|^2+|\beta|^2 = 1 \}$.
  Note that
  \begin{align*}
    \rho
    &=
    |\alpha|^2 \ket{00}\bra{00} + \alpha\beta^* \ket{00}\bra{11}\\
    &~~~+
    \beta\alpha^* \ket{11}\bra{00} + |\beta|^2 \ket{11}\bra{11}
    \in \QU{X}_1
  \end{align*}
  We have
  \begin{align*}
    \QU{D}_{\{1\}}(\rho)
    =
    \QU{D}_{\{2\}}(\rho)
    =
    |\alpha|^2 \ket{0}\bra{0} + |\beta|^2 \ket{1}\bra{1}.
  \end{align*}
  Hence,
  for $\theta \in \NNR$,
  \begin{align*}
    \ket{\psi_1}
    &:=
    \alpha \ket{00} + \beta \ket{11} \in \QU{X}_1,\\
    \ket{\psi_2}
    &:=
    \alpha \ket{00} + e^{j \theta}\beta \ket{11} \in \QU{X}_1
  \end{align*}
  satisfy
  \begin{align*}
    |\alpha|^2 \ket{0}\bra{0} + |\beta|^2 \ket{1}\bra{1}
    \in
    \QU{D}^1(\ket{\psi_1}\bra{\psi_1})
    \cap
    \QU{D}^1(\ket{\psi_2}\bra{\psi_2}).
  \end{align*}
  Thus, $d_{\text{min}}(\QU{X}_1) = 2$.
\end{exam}

\begin{exam}
  \label{exam:dis=4}
  Denote the set of $\bm{x} \in \{ 0,1 \}^n$ with Hamming weight $i$,
  by $X_{n,i}$.
  Define $\ket{i_{(n)}} := \sum_{ \bm{x} \in X_{n,i} } \ket{\bm{x}}$.
  Consider
  the minimum distance of the following quantum single-deletion-correcting code
  $\QU{X}_2$ given in \cite{hagiwara2020four}:
  \begin{align*}
    &\QU{X}_2
    :=
    \left\{
    \alpha
    \ket{0_L}
    +
    \beta
    \ket{1_L}
    \mid
    \alpha,\beta \in \C, |\alpha|^2+|\beta|^2 = 1
    \right\},
  \end{align*}
  where
  \begin{align*}
    &\ket{0_L}
    :=
    \tfrac{ 1 }{ \sqrt{2} }\big( \ket{0_{(4)}} + \ket{4_{(4)}}\big),
    ~~~~
    \ket{1_L}
    :=
    \tfrac{ 1 }{ \sqrt{6} } \ket{2_{(4)}}.
  \end{align*}
  For $p \in \setint{4}, \rho = \alpha \ket{0_L} + \beta \ket{1_L} \in \QU{X}_2$,
  we have
  \begin{align}
    &\QU{D}_{\{p\}}(\rho)
    \nonumber\\
    &=
    \tfrac{1}{2}
    \big( \alpha\ket{0_{(3)}} + \tfrac{\beta}{\sqrt{3}}\ket{2_{(3)}} \big)
    \big( \alpha^*\bra{0_{(3)}} + \tfrac{\beta^*}{\sqrt{3}}\bra{2_{(3)}} \big)
    \nonumber\\
    &~~~+
    \tfrac{1}{2}
    \big( \alpha\ket{3_{(3)}}+\tfrac{\beta}{\sqrt{3}}\ket{1_{(3)}} \big)
    \big( \alpha^*\bra{3_{(3)}}+\tfrac{\beta^*}{\sqrt{3}}\bra{1_{(3)}} \big).
    \label{eq:expm_del_code_dis}
  \end{align}
  For distinct $\rho_1, \rho_2 \in \QU{X}_2$,
  $\QU{D}^1(\rho_1) \cap \QU{D}^1(\rho_2) = \emptyset$
  since
  $\ket{0_{(3)}}, \ket{1_{(3)}}, \ket{2_{(3)}}, \ket{3_{(3)}}$
  are
  mutually orthogonal.
  That is,
  $d_{\text{min}}(\QU{X}_2) \ge 3$.
  From Eq.~\eqref{eq:expm_del_code_dis},
  for $P \in \subint{4}{2}$,
  we have
  \begin{align}
    \QU{D}_P(\rho)
    =&
    \tfrac{1}{2}
    \big(
    |\alpha|^2 + \tfrac{|\beta|^2}{3}
    \big)
    \big(
    \ket{0_{(2)}}\bra{0_{(2)}} + \ket{2_{(2)}}\bra{2_{(2)}}
    \big)
    \nonumber\\
    &+
    \tfrac{ \alpha\beta^* + \alpha^*\beta }{2\sqrt{3}}
    \big(
    \ket{0_{(2)}}\bra{2_{(2)}} + \ket{2_{(2)}}\bra{0_{(2)}}
    \big)
    \nonumber\\
    &+
    \tfrac{|\beta|^2}{3}\ket{1_{(2)}}\bra{1_{(2)}}.
    \label{eq:expm_del_code_dis_4}
  \end{align}
  Let $\text{arg } c$ be the argument of $c$.
  Then,
  \begin{align*}
    \ket{\psi_1}
    &:=
    \alpha \ket{0_L} + \beta \ket{1_L} \in \QU{X}_2,\\
    \ket{\psi_2}
    &:=
    \alpha \ket{0_L} + e^{2j( \text{arg } \alpha - \text{arg } \beta )}
    \beta \ket{1_L} \in \QU{X}_2
  \end{align*}
  satisfy
  \begin{align*}
    \text{(R.H.S. of Eq.~\eqref{eq:expm_del_code_dis_4})}
    \in
    \QU{D}^2(\ket{\psi_1}\bra{\psi_1})
    \cap
    \QU{D}^2(\ket{\psi_2}\bra{\psi_2}).
  \end{align*}
  Hence,
  $d_{\text{min}}(\QU{X}_2) = 4$.
\end{exam}

\subsection{Properties}
\begin{theo}
  The quantum indel distance is a metric on $\state{l}{\N}$.
\end{theo}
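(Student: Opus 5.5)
We verify the four metric axioms, working throughout with the third expression in the definition,
\begin{align*}
  d(\rho_1,\rho_2)=\min_{(s,t)\in\N^2}\{s+t \mid \QU{D}^s(\rho_1)\cap\QU{D}^t(\rho_2)\neq\emptyset\}.
\end{align*}
First we check that $d$ is well defined and finite. For $\rho_1\in\state{l}{n}$ and $\rho_2\in\state{l}{m}$, take $s=n$ and $t=m$. Then $\QU{D}^n(\rho_1)=\{(1)\}=\QU{D}^m(\rho_2)$, since the partial trace preserves the trace and $\state{l}{0}$ is a singleton. So the set being minimized is nonempty, its values lie in $\N$, and $d(\rho_1,\rho_2)\in\N$ with $d\le n+m$. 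Nonnegativity is immediate.

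Symmetry is also immediate from this form: interchanging $(s,t)$ with $(t,s)$ turns the condition for $(\rho_1,\rho_2)$ into the condition for $(\rho_2,\rho_1)$.

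For the identity of indiscernibles, $d(\rho,\rho)=0$ follows by taking $s=t=0$, because $\QU{D}^0(\rho)=\{\rho\}$. Conversely, suppose $d(\rho_1,\rho_2)=0$. Then $s=t=0$ satisfies the condition, so $\{\rho_1\}\cap\{\rho_2\}\neq\emptyset$, which gives $\rho_1=\rho_2$.

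The triangle inequality is the main step, and here we switch to the second form, $\rho_2\in\QU{I}^t\circ\QU{D}^s(\rho_1)$. Let $d(\rho_1,\rho_2)=s_1+t_1$, witnessed by $\rho_2\in\QU{I}^{t_1}\circ\QU{D}^{s_1}(\rho_1)$. Let $d(\rho_2,\rho_3)=s_2+t_2$, witnessed by $\rho_3\in\QU{I}^{t_2}\circ\QU{D}^{s_2}(\rho_2)$. Combining the two gives
\begin{align*}
  \rho_3\in\QU{I}^{t_2}\circ\QU{D}^{s_2}\circ\QU{I}^{t_1}\circ\QU{D}^{s_1}(\rho_1).
\end{align*}
This composite error is an $(s_1+s_2,\,t_1+t_2)$-error, so Theorem~\ref{theo:insdel_sphere} applies. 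Concretely, Lemma~\ref{lem:swap_TinsTdel_kai} with $\QU{X}=\QU{I}^{t_1}\circ\QU{D}^{s_1}(\rho_1)$ lets us move $\QU{D}^{s_2}$ past $\QU{I}^{t_1}$. Then $\QU{D}^{s_2}\circ\QU{D}^{s_1}=\QU{D}^{s_1+s_2}$ and $\QU{I}^{t_2}\circ\QU{I}^{t_1}=\QU{I}^{t_1+t_2}$, which yields
\begin{align*}
  \rho_3\in\QU{I}^{t_1+t_2}\circ\QU{D}^{s_1+s_2}(\rho_1).
\end{align*}
Hence $d(\rho_1,\rho_3)\le s_1+s_2+t_1+t_2=d(\rho_1,\rho_2)+d(\rho_2,\rho_3)$.

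The one step that needs care is the pair of composition identities $\QU{D}^{a}\circ\QU{D}^{b}=\QU{D}^{a+b}$ and $\QU{I}^{a}\circ\QU{I}^{b}=\QU{I}^{a+b}$ as set-valued maps. For deletions this holds because composing partial traces over nested index sets realizes every $(a+b)$-subset of positions, and every such subset arises this way. For insertions it follows from the deletion case via Lemma~\ref{lem:sins_iff_sdel}: $\sigma\in\QU{I}^{a+b}(\rho)$ iff $\rho\in\QU{D}^{a+b}(\sigma)=\QU{D}^{b}\circ\QU{D}^{a}(\sigma)$ iff there is an intermediate state $\mu$ with $\mu\in\QU{D}^{a}(\sigma)$ and $\rho\in\QU{D}^{b}(\mu)$, i.e.\ $\sigma\in\QU{I}^{a}(\mu)$ and $\mu\in\QU{I}^{b}(\rho)$. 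This is exactly $\sigma\in\QU{I}^{a}\circ\QU{I}^{b}(\rho)$. These identities are also used implicitly in the proof of Theorem~\ref{theo:insdel_sphere}.
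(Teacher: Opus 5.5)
Your proof is correct and is essentially the paper's: identity of indiscernibles via $\QU{D}^0(\rho)=\{\rho\}$, symmetry from the third form, and the triangle inequality by collapsing a four-fold composite with Theorem~\ref{theo:insdel_sphere}. The differences are minor: you chain $\rho_1\to\rho_2\to\rho_3$ directly, whereas the paper takes $\rho_2$ as a common image of $\rho_1$ and $\rho_3$ and reverses one leg with Lemma~\ref{lem:sub_set}; you also check that $d$ is finite, which the paper omits. There is one small slip: to move $\QU{D}^{s_2}$ past $\QU{I}^{t_1}$, apply Lemma~\ref{lem:swap_TinsTdel_kai} with $\QU{X}=\QU{D}^{s_1}(\rho_1)$, not $\QU{X}=\QU{I}^{t_1}\circ\QU{D}^{s_1}(\rho_1)$, and then apply $\QU{I}^{t_2}$ to both sides, which preserves inclusion.
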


\begin{proof}
  Any $\rho_1, \rho_2 \in \state{l}{\N}$ satisfy
  \begin{align*}
    d(\rho_1, \rho_2) = 0
    &\iff
    \emptyset
    \neq
    \QU{D}^0(\rho_1) \cap \QU{D}^0(\rho_2)
    =
    \{ \rho_1 \} \cap \{ \rho_2 \}\\
    &\iff
    \rho_1 = \rho_2.
  \end{align*}
  The non-negativity and the symmetry are trivial.
  Now,
  we will prove the triangle inequality.
  For $\rho_1, \rho_2, \rho_3 \in \state{l}{\N}$,
  define
  \begin{align*}
    (s_1,t_1)
    &:=
    \underset{(s,t) \in \N^2}{\operatorname{argmin}}~
    \{
    s+t
    \mid
    \rho_2 \in \QU{I}^{t} \circ \QU{D}^{s}(\rho_1)
    \},\\
    (s_2,t_2)
    &:=
    \underset{(s,t) \in \N^2}{\operatorname{argmin}}~
    \{
    s+t
    \mid
    \rho_2 \in \QU{I}^{t} \circ \QU{D}^{s}(\rho_3)
    \}.
  \end{align*}
  From $\rho_2 \in \QU{I}^{t_1} \circ \QU{D}^{s_1}(\rho_1)$ and
  $\rho_2 \in \QU{I}^{t_2} \circ \QU{D}^{s_2}(\rho_3)$,
  we get
  \begin{align*}
    \QU{I}^{t_1} \circ \QU{D}^{s_1}(\rho_1)
    \cap
    \QU{I}^{t_2} \circ \QU{D}^{s_2}(\rho_3)
    \neq
    \emptyset.
  \end{align*}
  From Lemma \ref{lem:sub_set} and Theorem \ref{theo:insdel_sphere},
  we have
  \begin{align*}
    \rho_1 
    &\in
    \QU{I}^{s_1} \circ \QU{D}^{t_1} \circ \QU{I}^{t_2} \circ \QU{D}^{s_2}(\rho_3)
    \subset
    \QU{I}^{s_1+t_2} \circ \QU{D}^{t_1+s_2}(\rho_3).    
  \end{align*}
  Therefore,
  \begin{align*}
    d(\rho_1 ,\rho_3)
    &\leq
    s_1+t_2 + t_1+s_2\\
    &=
    d(\rho_1, \rho_2) + d(\rho_2, \rho_3).
  \end{align*}
  holds.
\end{proof}

\begin{lem}
  \label{lem:1del_iff_1ins_1}
  Any $\rho_1, \rho_2 \in \state{l}{n}$ satisfy
  \begin{align*}
    d(\rho_1,\rho_2)
    \ge
    2t+1
    \iff
    \QU{D}^t(\rho_1) \cap \QU{D}^t(\rho_2) = \emptyset
  \end{align*}
\end{lem}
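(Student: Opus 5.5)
We prove both directions using the third form of the definition of $d$. Recall that $d(\rho_1,\rho_2)=\min\{s+t \mid \QU{D}^s(\rho_1)\cap\QU{D}^t(\rho_2)\neq\emptyset\}$. Since $\rho_1,\rho_2\in\state{l}{n}$, a common element of $\QU{D}^s(\rho_1)$ and $\QU{D}^t(\rho_2)$ lives in $\state{l}{(n-s)}$ and in $\state{l}{(n-t)}$, so it can exist only when $s=t$. Hence the minimum is attained at some $(s,s)$, and
\begin{align*}
  d(\rho_1,\rho_2) = 2\min\{ s \in \N \mid \QU{D}^s(\rho_1)\cap\QU{D}^s(\rho_2)\neq\emptyset\}.
\end{align*}
The minimum exists because $\QU{D}^n(\rho_i)=\state{l}{0}$ is the same singleton for both states. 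In particular, $d(\rho_1,\rho_2)$ is always even.

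Next we need a monotonicity fact: if $\QU{D}^s(\rho_1)\cap\QU{D}^s(\rho_2)\neq\emptyset$, then $\QU{D}^{s+1}(\rho_1)\cap\QU{D}^{s+1}(\rho_2)\neq\emptyset$ for $s<n$. To see this, take $\sigma=\QU{D}_{P_1}(\rho_1)=\QU{D}_{P_2}(\rho_2)$ and apply one more partial trace, say $\text{Tr}_1$, to $\sigma$. We must check that $\text{Tr}_1\circ\QU{D}_{P_i}$ equals $\QU{D}_{P_i'}$ for some $P_i'\in\subint{n}{s+1}$. This holds because partial traces compose to a partial trace over the union of the (re-indexed) positions, which is the same commutation fact $\QU{D}_p\circ\QU{D}_q=\QU{D}_{q-1}\circ\QU{D}_p$ already used in the proof of Lemma \ref{lem:1ins_1del_kai}. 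Consequently the set of $s$ with nonempty intersection is upward closed, and $d(\rho_1,\rho_2)=2s^\ast$, where $s^\ast$ is its least element.

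Now both directions follow. If $\QU{D}^t(\rho_1)\cap\QU{D}^t(\rho_2)=\emptyset$, then by monotonicity the intersection is empty for every $s\le t$. Hence $s^\ast\ge t+1$ and $d(\rho_1,\rho_2)\ge 2t+2\ge 2t+1$. Conversely, if the intersection at level $t$ is nonempty, then $s^\ast\le t$, so $d(\rho_1,\rho_2)\le 2t<2t+1$. When $t>n$, $\QU{D}^t$ is undefined (or empty), and this edge case should be handled by convention, e.g.\ restricting to $t\le n$.

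The only step needing care is the monotonicity/composition step, namely that a further deletion of a state already in $\QU{D}^s(\rho)$ lies in $\QU{D}^{s+1}(\rho)$. This is routine bookkeeping of indices, using the ordering convention $p_1<\cdots<p_s$ in the definition of $\QU{D}_P$. Everything else is immediate from the dimension-counting observation and the definition.
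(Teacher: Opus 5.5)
Your proof is correct and takes the paper's approach: the paper's entire argument is the remark that $d(\rho_1,\rho_2)$ is even for states on the same number of qudits, and your dimension count justifies exactly that. You also make explicit the upward-closure step that the paper's one-liner silently needs for the direction from $\QU{D}^t(\rho_1)\cap\QU{D}^t(\rho_2)=\emptyset$ to $d(\rho_1,\rho_2)\ge 2t+1$: a common element at level $s$ gives one at level $s+1$. Your caveat about $t>n$ is apt, since reading $\QU{D}^t(\rho_i)=\emptyset$ there would break the equivalence, because $d(\rho_1,\rho_2)\le 2n$ always.
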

Noting that
$d(\rho_1,\rho_2)$
is
even
for $\rho_1,\rho_2 \in \state{l}{n}$,
we get
Lemma \ref{lem:1del_iff_1ins_1}.
Lemma \ref{lem:1del_iff_1ins_1} yields the following theorem.

\begin{theo}
  \label{theo:1del_iff_1ins_1}
  The set $\QU{X} \subset \state{l}{n}$
  corrects
  $t$-deletion errors
  iff $d_{\text{min}}(\QU{X}) \ge 2t+1$.
\end{theo}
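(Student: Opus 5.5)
The plan is to reduce Theorem \ref{theo:1del_iff_1ins_1} directly to Lemma \ref{lem:1del_iff_1ins_1}, applied pairwise to the elements of $\QU{X}$. By Definition \ref{def:error_corrct} with $\QU{E}=\QU{D}^t$, the set $\QU{X}$ corrects $t$-deletion errors if and only if $\QU{D}^t(\rho_1)\cap\QU{D}^t(\rho_2)=\emptyset$ for all distinct $\rho_1,\rho_2\in\QU{X}$. On the other side, $d_{\text{min}}(\QU{X})\ge 2t+1$ means exactly that $d(\rho_1,\rho_2)\ge 2t+1$ for every such pair, since $d_{\text{min}}$ is the minimum of $d$ over distinct pairs. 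Because $\QU{X}\subset\state{l}{n}$, both states in each pair have the same length $n$, so Lemma \ref{lem:1del_iff_1ins_1} applies pair by pair. Taking the conjunction over all pairs gives the equivalence.

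Since the theorem leans entirely on Lemma \ref{lem:1del_iff_1ins_1}, I would also make that lemma's justification explicit. I would use the third form of the definition of $d$: $d(\rho_1,\rho_2)=\min\{s+t' \mid \QU{D}^s(\rho_1)\cap\QU{D}^{t'}(\rho_2)\neq\emptyset\}$.

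The first point is that for $\rho_1,\rho_2\in\state{l}{n}$ any witness pair $(s,t')$ has $s=t'$. Indeed, $\QU{D}^s(\rho_1)\subset\state{l}{(n-s)}$ and $\QU{D}^{t'}(\rho_2)\subset\state{l}{(n-t')}$, and density matrices of different sizes cannot coincide. Hence $d(\rho_1,\rho_2)=2k$, where $k$ is the least integer with $\QU{D}^k(\rho_1)\cap\QU{D}^k(\rho_2)\neq\emptyset$. Such a $k\le n$ always exists, because $\QU{D}^n$ maps everything to the unique element of $\state{l}{0}$.

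The second point is monotonicity in $k$. If $\QU{D}^k(\rho_1)\cap\QU{D}^k(\rho_2)\ni\sigma$, then applying one further partial trace at the same position to both sides gives $\QU{D}^{k+1}(\rho_1)\cap\QU{D}^{k+1}(\rho_2)\neq\emptyset$. This uses that a composition of deletions at distinct positions is again a deletion $\QU{D}_P$ with $|P|=k+1$.

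With these two points, $d\ge 2t+1$ is equivalent to $2k>2t$, that is, $k\ge t+1$. By minimality of $k$ and monotonicity, $k\ge t+1$ holds exactly when the $t$-deletion spheres of $\rho_1$ and $\rho_2$ are disjoint.

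I expect the main obstacle to be the monotonicity step. One must check carefully that $\QU{D}^1\circ\QU{D}^k(\rho)\subset\QU{D}^{k+1}(\rho)$, which requires relabeling indices of successive partial traces, as in the identity $\QU{D}_p\circ\QU{D}_q=\QU{D}_{q-1}\circ\QU{D}_p$ used earlier. The rest is bookkeeping.
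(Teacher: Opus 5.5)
Your proposal is correct and follows the paper's route: the theorem comes from applying Lemma \ref{lem:1del_iff_1ins_1} to each pair of distinct states in $\QU{X}$, and that lemma rests on $d(\rho_1,\rho_2)$ being even for states of equal length. Your explicit monotonicity step, $\QU{D}^1\circ\QU{D}^k(\rho)\subset\QU{D}^{k+1}(\rho)$, is needed for the direction ``$t$-spheres disjoint $\Rightarrow d\ge 2t+1$''; the paper's one-line parity remark leaves it implicit, so your write-up is, if anything, more complete.
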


From Theorems \ref{theo:1del_iff_1ins_2} and \ref{theo:1del_iff_1ins_1},
we obtain the following.

\begin{coro}
  The set $\QU{X} \subset \state{l}{n}$
  corrects
  a total of $t$ insertion and deletion errors
  if $d_{\text{min}}(\QU{X}) \ge 2t+1$.
\end{coro}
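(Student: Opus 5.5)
The corollary should follow by chaining Theorem \ref{theo:1del_iff_1ins_1} with Theorem \ref{theo:1del_iff_1ins_2}, so the proof is essentially a two-step implication. Fix $\QU{X} \subset \state{l}{n}$ with $d_{\text{min}}(\QU{X}) \ge 2t+1$.

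\emph{First step.} By the ``if'' direction of Theorem \ref{theo:1del_iff_1ins_1}, $\QU{X}$ corrects $t$-deletion errors. Concretely, Lemma \ref{lem:1del_iff_1ins_1} gives $\QU{D}^t(\rho_1) \cap \QU{D}^t(\rho_2) = \emptyset$ for every pair of distinct $\rho_1, \rho_2 \in \QU{X}$, since $d(\rho_1,\rho_2) \ge d_{\text{min}}(\QU{X}) \ge 2t+1$. This is exactly Definition \ref{def:error_corrct} applied to $\QU{E} = \QU{D}^t$.

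\emph{Second step.} Theorem \ref{theo:1del_iff_1ins_2} then shows that $\QU{X}$ also corrects a total of $t$ deletion and insertion errors. For completeness I would recall the mechanism. Write $\QU{D}^t = \QU{I}^0 \circ \QU{D}^t$ and apply Lemma \ref{lem:1del_iff_1ins_2} repeatedly. This shows that $\QU{I}^{t-s'} \circ \QU{D}^{s'}$-spheres of distinct codewords are disjoint for every $s' \in \setint{0,t}$. By Theorem \ref{theo:insdel_sphere}, every $(s',t-s')$-error sphere of a codeword lies inside its $\QU{I}^{t-s'} \circ \QU{D}^{s'}$-sphere, so those spheres are disjoint as well.

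\emph{Boundary case.} The one point needing care is how ``a total of $t$'' is read. If it includes totals $s'+t' < t$, I would observe that Lemma \ref{lem_ID=non} together with Lemma \ref{lem:swap_TinsTdel_kai} gives $\QU{I}^{t'}\circ\QU{D}^{s'}(\rho) \subset \QU{I}^{t'+1}\circ\QU{D}^{s'+1}(\rho)$. So smaller error totals are dominated by larger ones with the same difference $t'-s'$. Alternatively, for $\mu \in \QU{I}^{t'}\circ\QU{D}^{s'}(\rho_1) \cap \QU{I}^{t'}\circ\QU{D}^{s'}(\rho_2)$, the triangle inequality gives $d(\rho_1,\rho_2) \le 2(s'+t') \le 2t$, which contradicts $d_{\text{min}}(\QU{X}) \ge 2t+1$.

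That metric argument is perhaps the cleanest self-contained route, and it needs only the metric theorem and the definition of $d$. Nothing here is a real obstacle: all the substantive work sits in the cited theorems. The only thing to verify is that their hypotheses match, namely that every state lies in $\state{l}{n}$ and that distinctness is used exactly as in Definition \ref{def:error_corrct}.
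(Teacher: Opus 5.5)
Your proposal is correct and follows the paper's own proof: the paper obtains the corollary by chaining Theorem~\ref{theo:1del_iff_1ins_1} ($d_{\text{min}}(\QU{X})\ge 2t+1$ gives $t$-deletion correction) with Theorem~\ref{theo:1del_iff_1ins_2}, and your recap of the mechanism via repeated Lemma~\ref{lem:1del_iff_1ins_2} and Theorem~\ref{theo:insdel_sphere} is accurate. Your triangle-inequality side argument is also valid and handles mixed totals $\le t$ directly. It is cleaner than the padding argument, because padding preserves the parity of $s'+t'$ and so may only reach total $t-1$; that case is still covered, since $d_{\text{min}}(\QU{X})\ge 2(t-1)+1$ as well.
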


\begin{exam}
  From Example \ref{exam:dis=4},
  $d_{\text{min}}(\QU{X}_2) =4$ holds.
  Hence,
  $\QU{X}_2$
  corrects
  a single insertion error,
  as shown in \cite{hagiwara2021four}.
\end{exam}

\section*{Acknowledgment}
This work was supported by USPS KAKENHI Grant Number 22K11905.

\bibliography{myref}


\end{document}